\documentclass[10pt, twocolumn, comsoc]{IEEEtran}

\usepackage{graphicx,epsfig}
\usepackage[noadjust]{cite}
\usepackage{mcite}
\usepackage{amsfonts,helvet}
\usepackage{fancyhdr}
\usepackage{threeparttable}
\usepackage{epsf,epsfig}
\usepackage{amsthm}
\usepackage{amsmath}
\usepackage{boldline}
\usepackage{booktabs}
\usepackage{amssymb}
\usepackage{lipsum}
\usepackage[caption=false,font=footnotesize]{subfig}
\usepackage{textcomp}
\usepackage{dsfont}

\usepackage{ragged2e} 
\usepackage{multirow} 
\usepackage{makecell} 

\usepackage[colorlinks=false, linkcolor=blue]{hyperref}

\usepackage{dsfont}
\usepackage{color}
\usepackage{array}
\usepackage{algpseudocode}
\usepackage{algcompatible}
\usepackage{enumerate}
\usepackage{gensymb}
\usepackage{cancel}

\usepackage[linesnumbered,ruled,noend]{algorithm2e}

\usepackage{fancyhdr}
\usepackage{graphicx,epsfig}
\usepackage{wrapfig}
\usepackage{ragged2e}
\usepackage{algpseudocode}
\usepackage{algcompatible}
\usepackage{threeparttable}
\usepackage{booktabs} 
\usepackage{tabularx}

\newtheorem{theorem}{Theorem}
\newtheorem{corollary}{Corollary}

\newtheorem{lemma}{Lemma}

\newtheorem{remark}{Remark}

\usepackage{eucal}

\begin{document}

\title{Load Balancing in Multi-Shell LEO Satellite Networks with Successive Interference Cancellation}


\author{Seyong~Kim, Jeonghun~Park, and Jeffrey G. Andrews

\thanks{This work was supported in part by the Institute of Information \& Communications Technology Planning \& Evaluation (IITP) grant funded by the Korea government (MSIT) under 6G Cloud Research and Education Open Hub (IITP-2025-RS-2024-00428780), and in part by the Samsung Research Funding \& Incubation Center of Samsung Electronics under Project Number SRFC-IT2402-06. S. Kim and J. Park are with the School of Electrical and Electronic Engineering, Yonsei University, Seoul 03722, South Korea (e-mail: {\texttt{sykim@yonsei.ac.kr; jhpark@yonsei.ac.kr}}). J. G. Andrews is with the 6G@UT Research Center, Wireless Networking and Communications Group, University of Texas at Austin (e-mail:{\texttt{jandrews@ece.utexas.edu}}).}}


\maketitle

\begin{abstract}
Multi-shell low Earth orbit (LEO) networks can increase service opportunities, but altitude-dependent propagation can concentrate traffic on lower shells and create strong inter-shell interference under full frequency reuse.
This paper develops a mathematical framework for load balancing in multi-shell LEO satellite networks.
Satellites on each shell form an independent spherical Poisson point process (SPPP), and the typical user associates with one of the per-shell serving satellites through a shell-dependent biased received-power rule, with receiver-side successive interference cancellation (SIC) under full frequency reuse.
Shell-wise association probabilities, conditioned serving-distance distributions, and the rate coverage probability under shell-dependent traffic loads are derived and validated by simulation.
The results show that shell-dependent biasing alleviates lower-shell traffic concentration and improves rate coverage, while receiver-side SIC mitigates the dominant lower-shell interference experienced by users associated with upper shells.
Load balancing provides its largest rate-coverage gain in traffic hotspots, while SIC becomes more valuable as receive-side isolation weakens.
With a fixed satellite budget, distributing satellites across multiple shells can further improve hotspot rate coverage by adding shell-wise serving opportunities.
\end{abstract}

\begin{IEEEkeywords}
Satellite communications, Poisson point process, stochastic geometry, load balancing.
\end{IEEEkeywords}

\section{Introduction}
Dense low Earth orbit (LEO) mega-constellations such as Starlink deploy large numbers of satellites across multiple low-altitude orbital shells. 
Such satellite networks must provide wide-area connectivity while sustaining reliable per-user rates under highly nonuniform traffic demand \cite{samy:commmag:2022,chen:surv:2026,di:wcom:2019}.
This requirement becomes especially important in traffic hotspots, such as densely populated urban areas, where many users compete for the limited time-frequency resources of the satellites serving the same area.
Multi-shell deployments are motivated not only by coverage expansion but also by capacity scaling \cite{wang:vtmag:2021}. For example, recent LEO constellation filings consider multiple shells with different altitudes and inclinations, where lower-altitude shells can form smaller ground footprints and support denser spatial reuse \cite{fcc:spacex:2020, fcc:kuiper:2021}. 


Motivated by such multi-shell deployments, this paper considers a setting in which satellites from different shells may illuminate the same ground footprint over the same time--frequency resource \cite{fcc2beam:spacex:2026}.
This architecture creates two coupled challenges.
First, lower-altitude shells generally provide shorter slant distances and lower path loss, so received-power-based association preferentially selects them and can concentrate traffic on these shells.
Second, associating users with higher shells can reduce lower-shell traffic concentration, but such users generally experience larger propagation loss and remain exposed to strong lower-shell serving signals under full frequency reuse.
To address these challenges, shell-dependent association biasing redistributes traffic toward higher shells to relieve lower-shell traffic concentration, while receiver-side successive interference cancellation (SIC) cancels dominant lower-shell serving signals to mitigate the interference penalty of upper-shell association.

The objective of this paper is to develop a mathematical framework for quantifying the resulting load-balancing and interference-mitigation effects in multi-shell LEO satellite networks.
Using stochastic geometry, we characterize how shell association and traffic load interact with inter-shell interference to determine rate coverage under different constellation and traffic conditions.



\subsection{Related Work}
Load balancing in multi-layered satellite networks has been studied from several perspectives \cite{li:aina:2018,kawamoto:tvt:2013,nishiyama:tvt:2013,wang:vtc:2015}, with broader surveys provided in \cite{shang:surv:2026,Kodheli:surv:2021}.
For example, \cite{nishiyama:tvt:2013} distributes traffic across LEO and medium Earth orbit layers using capacity and congestion information, while \cite{wang:vtc:2015} develops load-aware routing based on lower-layer load information.
Related control-plane work jointly optimizes controller placement and switch-to-controller assignment according to controller load and satellite mobility \cite{chen:tsc:2024}.
These approaches redistribute traffic across satellite layers, but mainly address routing or control-plane load balancing rather than user association, traffic-dependent resource-sharing, and interference mitigation in co-channel multi-shell LEO access.

Association-stage load balancing introduces an inherent tradeoff between traffic redistribution and interference.
In terrestrial heterogeneous networks, association biasing or cell range expansion redistributes traffic across tiers but can expose offloaded users to strong cross-tier interference \cite{singh:twc:13,jo:twc:2012,Damnjanovic:wcom:2011}.
Receiver-side SIC has been studied to mitigate this interference and preserve the benefits of traffic offloading \cite{weber:tit:2007,zhang:tit:2014,wildemeersch:tcom:2014}.
In particular, \cite{wildemeersch:tcom:2014} jointly considers SIC, minimum-load association, and range expansion.
Related receiver-side SIC techniques have also been considered in multi-layer and LEO satellite systems \cite{lu:twc2025,li:tvt:2026}.
In multi-shell LEO networks, the same tradeoff is shaped by the shell geometry: users offloaded to upper shells experience larger serving distances, while lower-shell serving signals can become dominant interferers and SIC targets.

Stochastic geometry (SG) provides a mathematical framework for analyzing these association and interference effects.
SG has been widely applied to LEO satellite networks to quantify the effects of satellite density, altitude, and propagation on coverage \cite{okati:tcom:20,park:twc:22,choi:tcom:2025}.
For example, \cite{park:twc:22} showed that stochastic satellite models can closely reproduce coverage predictions based on actual Starlink locations, while related works have considered multi-layer constellations with inter-layer interference \cite{okati:commlett:2023,ma:pimrc35:2024}.
In addition, \cite{choi:twc:2024} analyzes association, inter-constellation interference, and SINR coverage for multiple coexisting LEO constellations over different orbits and altitudes.
These studies provide tractable association and coverage analyses, but generally do not couple shell association with traffic-dependent resource-sharing or load-aware biasing.

To the best of our knowledge, shell-dependent association biasing jointly with receiver-side SIC has not been characterized for co-channel multi-shell LEO networks.
In this setting, association, traffic load, and interference are coupled: associating a user with an upper shell changes both its serving-link geometry and shell-wise resource-sharing, while lower-shell serving signals can become dominant interferers and potential SIC targets.
This motivates a rate coverage framework that jointly accounts for shell-dependent association biasing, traffic load, inter-shell interference, and receiver-side SIC.

\subsection{Contributions}

The main contributions of this paper are summarized as follows.

\textbf{Mathematical multi-shell SG framework}: We develop a novel SG-based framework for an \(N\)-shell LEO network, with each shell modeled by an independent spherical Poisson point process (SPPP). The model assumes at most one serving satellite per shell, with up to \(N\) shell-wise serving satellites illuminating the same footprint under full frequency reuse, while the user associates with one according to a biased received-power rule. Non-associated serving satellites generate inter-shell interference, whereas non-serving satellites contribute side-lobe interference. Under this model, we derive shell-wise association probabilities, association-conditioned serving-distance distributions, and rate coverage under shell-dependent loads and receiver-side SIC.

\textbf{Association-stage load balancing with SIC-enabled full reuse}: We introduce shell-dependent association biasing to redistribute traffic from preferentially selected lower shells toward upper shells. The bias trades the resource-sharing gain from offloading against the larger propagation loss of upper-shell service. Under full frequency reuse, receiver-side SIC cancels strong lower-shell serving signals for users associated with upper shells, mitigating the interference penalty of upper-shell offloading without requiring cross-shell transmission coordination or power allocation.

\textbf{Rate coverage verification and design insights}: Monte Carlo simulations validate the analytical results and show that multi-shell load balancing with full frequency reuse is most beneficial in traffic-hotspot regimes.
For the evaluated configurations, denser constellations require stronger upper-shell biasing.
SIC provides larger gains as receive-side spatial isolation weakens.
With a fixed satellite budget, multiple shells create additional shell-wise serving opportunities over the same footprint and can improve hotspot rate coverage compared with concentrating the satellites in a single shell.

The main symbols used throughout this paper are summarized in Table~\ref{tab:notation}.
Section~\ref{sec:system model} presents the multi-shell network model. Section~\ref{sec:shell-wise association prob} analyzes shell association under biasing. Section~\ref{sec:rate coverage} derives the rate coverage with shell-dependent loads and receiver-side SIC. Section~\ref{sec:numerical results} validates the analysis and presents numerical results. Section~\ref{sec:conclusion} concludes the paper.


\begin{table*}[t!]
\caption{Summary of notation.}
\label{tab:notation}
\centering
\footnotesize
\begin{tabularx}{\textwidth}{@{}l X @{\hspace{0.8em}\vrule width 0.4pt\hspace{0.8em}} l X@{}}
\toprule
Symbol & Description & Symbol & Description \\
\midrule
\(\mathcal A_n(r),\ \Lambda_n(r)\) & Visible region within \(r\); intensity measure & \(I_m^{\rm serv},\ I_m^{\rm non},\ \mathcal I_n^{(i)}\) & Serving, non-serving, and residual interference \\
\(\Phi_n^{\rm serv},\ \Phi_n^{\rm non}\) & Serving and non-serving satellite sets & \(\mathcal F_n^{(k)},\ Z_{n,i}\) & Interference when decoding shell \(k\); \(\sigma^2+\mathcal I_n^{(i)}\) \\
\(R_n,\ \mathbf R,\ \mathbf R_{-n}\) & Serving distance; its vector; vector excluding \(R_n\) & \(D_n^{(i)},\ \mathcal D_n^{(i)}\) & Desired-decoding and incremental desired-decoding events \\
\(\mathds{1}_{\{\cdot\}},\ \Omega,\ (\cdot)^c\) & Indicator; sure event; complement & \(C_n^{(k)},\ \mathcal C_n^{(i)}\) & Shell-\(k\) cancellation and cumulative cancellation events \\
\(\rho_{m,n}(r),\ p_{m,n}^{\rm void}(r)\) & Admissible-distance limit; void probability & \(\mathsf S_n,\ Y_k,\ \mathcal S_n(\gamma)\) & Desired power; shell-\(k\) serving power; success event \\
\(\mathcal E_n,\ E_n,\ p_n\) & Existence and association events; \(\mathbb P(E_n)\) & \(M_m^{\rm serv},\ M_m^{\rm non}\) & Serving- and non-serving-link Laplace factors \\
\(K_u,\ \delta,\ L_n\) & Candidate users; concurrent-demand fraction; load & \(\Psi_m^{\rm non},\ \mathcal L_X\) & Aggregate non-serving interference transform; Laplace transform \\
\(\mathcal R_n^{(i)},\ \tau_n\) & Stage-\(i\) rate; required SINR threshold & \(q_{n,i},\ \mathcal P_{{\rm cov},n}(\gamma),\ p_{\rm cov}(\gamma)\) & Stage-\(i\) rate coverage contribution; shell-conditioned and overall rate coverage \\
\bottomrule
\end{tabularx}
\end{table*}

\begin{figure}
    \centering
    \includegraphics[width=1\linewidth]{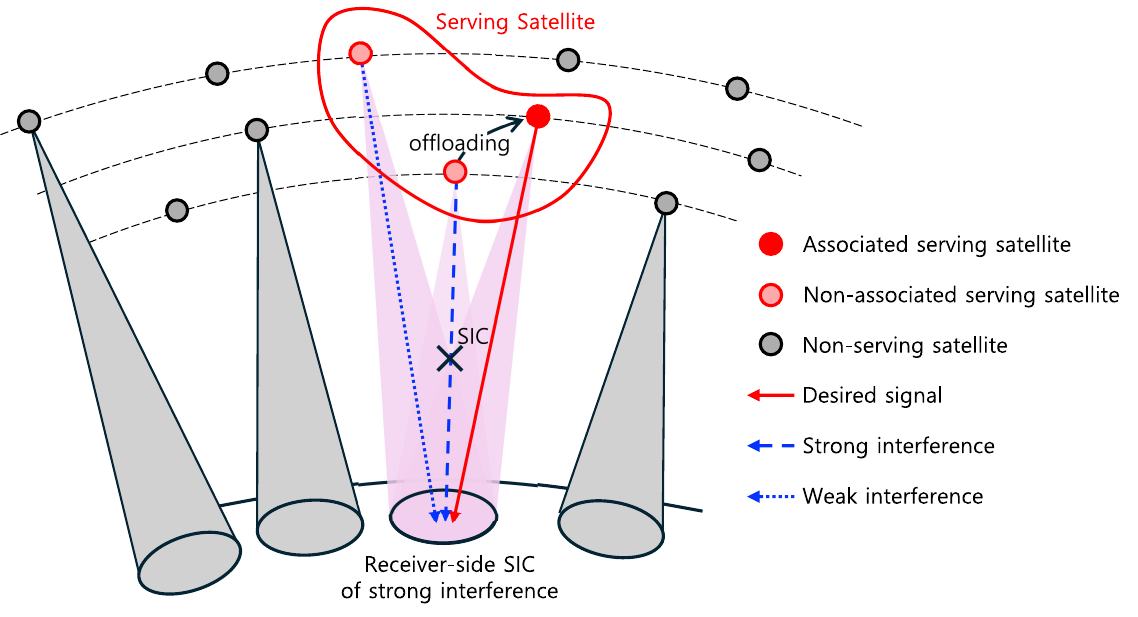}
        \caption{Illustration of the considered three-shell downlink satellite network.}
    \label{fig:system_model}
\end{figure}

\section{System Model}\label{sec:system model}

We consider a downlink satellite network with \(N\) concentric orbital shells indexed by \(n\in\mathcal N=\{1,\ldots,N\}\), ordered by altitude as \(h_1<h_2<\cdots<h_N\). The considered network geometry is illustrated for \(N=3\) in Fig.~\ref{fig:system_model}.
Let \(R_\oplus\) denote the Earth radius and \(h_n\) the altitude of shell \(n\). The radius of shell \(n\) is \(a_n=R_\oplus+h_n\), and its surface is \(\mathbb S_n=\{\mathbf x\in\mathbb R^3:\|\mathbf x\|=a_n\}\).

\subsection{Shell and Association Model}
Satellites on shell \(n\) are modeled as a homogeneous SPPP \(\bar\Phi_n=\{\mathbf d_{1,n},\ldots,\mathbf d_{|\bar\Phi_n|,n}\}\) with density \(\lambda_n\). The SPPPs \(\{\bar\Phi_n\}_{n=1}^N\) are mutually independent, and \(|\bar\Phi_n|\sim{\rm Poisson}(4\pi\lambda_n a_n^2)\). By rotational invariance, the typical user is placed at \(\mathbf u=(0,0,R_\oplus)\).

Let \(\theta_{\min}\) denote the minimum elevation angle. The maximum visible distance from the typical user to shell \(n\) is
\begin{equation}
r_{\max,n}=\sqrt{a_n^2-R_\oplus^2\cos^2\theta_{\min}}-R_\oplus\sin\theta_{\min}. \nonumber
\end{equation}
The visible region on shell \(n\) within distance \(r\) is \(\mathcal A_n(r)=\{\mathbf x\in\mathbb S_n:\|\mathbf x-\mathbf u\|\le r\}\), and \(\mathcal A_n=\mathcal A_n(r_{\max,n})\). The visible satellite process is \(\Phi_n=\bar\Phi_n\cap\mathcal A_n\).

For the considered time-frequency resource block, the visible satellites in shell \(n\) are partitioned into serving and non-serving sets, denoted by \(\Phi_n^{\rm serv}\) and \(\Phi_n^{\rm non}\), respectively. At most one satellite per shell serves the typical footprint. When \(\Phi_n\neq\varnothing\), the serving satellite for $\mathbf{u}$ in shell $n$ is selected as the nearest visible satellite:
\begin{equation}
\mathbf d_{0,n}=\arg\min_{\mathbf d_{j,n}\in\Phi_n}\|\mathbf d_{j,n}-\mathbf u\|. \nonumber
\end{equation}
Accordingly, \(\Phi_n^{\rm serv}=\{\mathbf d_{0,n}\}\), \(\Phi_n^{\rm non}=\Phi_n\setminus\{\mathbf d_{0,n}\}\), and \(R_n=\|\mathbf d_{0,n}-\mathbf u\|\). If \(\Phi_n=\varnothing\), then \(\Phi_n^{\rm serv}=\Phi_n^{\rm non}=\varnothing\).

The typical user can associate only with serving satellites, since non-serving satellites are not beam-aligned with the typical footprint. The associated shell is determined by
\begin{equation}\label{eq:association_rule_serving}
n^*=\arg\max_{n:\Phi_n^{\rm serv}\neq\varnothing} P_nB_nR_n^{-2},
\end{equation}
where \(P_n\) and \(B_n\) denote the transmit power and association bias of shell \(n\), respectively. The bias \(B_n\) controls the shell-wise association probability and enables traffic offloading across shells. If \(P_nB_n\) is identical for all shells, \eqref{eq:association_rule_serving} reduces to nearest-serving-satellite association.

The area of \(\mathcal A_n(r)\) is \(A_n(r)=\pi a_n(r^2-h_n^2)/R_\oplus\) for \(h_n\le r\le r_{\max,n}\), and \(A_n=A_n(r_{\max,n})\). The distance-dependent intensity measure of shell \(n\) is
\begin{equation}
\Lambda_n(r)=
\begin{cases}
0, & r<h_n,\\
\lambda_n A_n(r), & h_n\le r\le r_{\max,n},\\
\lambda_n A_n, & r>r_{\max,n}.
\end{cases} 
\end{equation} 
Its derivative is \(\Lambda_n'(r)=2\pi\lambda_n a_n r\mathds{1}_{\{h_n<r<r_{\max,n}\}}/R_\oplus\).
Let \(\mathcal E_n=\{\Phi_n\neq\varnothing\}\) denote the event that shell \(n\) contains at least one visible satellite, and let \(\mathcal E=\bigcup_{n=1}^N\mathcal E_n\) denote the event that at least one serving candidate is available across all shells.
Association is possible only when \(\mathcal E\) occurs.
By independence of the shell-wise SPPPs and their void probabilities,
\begin{equation}\label{eq:exist p}
p_{\mathcal E}
=\mathbb P(\mathcal E)
=1-\prod_{n=1}^N e^{-\lambda_n A_n}
=1-\exp\left(-\sum_{n=1}^N\lambda_n A_n\right).
\end{equation}

\subsection{Channel and Antenna Models}

We adopt Nakagami-$m_0$ fading because it provides a flexible yet analytically tractable model for a range of fading conditions.
Under Nakagami-\(m_0\) fading, the power gain \(H_{j,n}\sim\mathrm{Gamma}(m_0,1/m_0)\) has unit mean. For integer \(m_0\ge1\) and \(x\ge0\), its complementary cumulative distribution function (CCDF) is
\begin{equation}\label{eq:ccdf gamma}
\mathbb{P}(H>x)=e^{-m_0x}\sum_{q=0}^{m_0-1}\frac{(m_0x)^q}{q!}.
\end{equation}
The fading gains are mutually independent across satellites, links, and shells, and are independent of the point processes, serving-satellite selection, association events, and antenna-gain variables.

Practical satellite receive antennas exhibit a pronounced gain separation between the boresight and off-axis regions, with the off-axis gain approaching a nearly constant level at sufficiently large angular separations \cite{etsi:en303981:22}.
To capture this dominant main- and off-axis gain separation while retaining analytical tractability, we approximate the receive antenna pattern using the two-state sectored model in \cite{bai:twc:2015,hunter:twc:2008}.
The associated serving link has aligned transmit and receive beams and therefore experiences the main-lobe gain product \(G_{\rm t}G_{\rm r}\).
A non-associated serving satellite in shell \(m\neq n\) is also assumed to illuminate the typical user within its transmit main lobe, and hence contributes the transmit gain \(G_{\rm t}\).
Since the user receive beam is aligned with the associated satellite, only the receive-side gain of the non-associated serving link depends on the angular separation between the two satellite directions.
Its effective gain is therefore modeled as \(G_{\rm t}X_m\), where
\begin{equation}\label{eq:antenna gain}
X_m =
\begin{cases}
G_{\rm r}, & \text{with probability } p_{\rm r},\\
g_{\rm r}, & \text{with probability } 1-p_{\rm r}.
\end{cases}
\end{equation}
The variables \(\{X_m\}_{m=1}^N\) are mutually independent and independent of the point processes and fading gains. The probability \(p_{\rm r}\) captures the likelihood that a non-associated serving signal is received through the main lobe of the user antenna. This two-state model can be further generalized to a multi-state discrete model by partitioning the receive antenna pattern into finer angular regions, thereby providing a more refined approximation of practical off-axis gain patterns \cite{etsi:en303981:22}.

If \(\Phi_m^{\rm serv}\neq\varnothing\), the serving interference from shell \(m\neq n\) is
\begin{equation}
I_m^{\rm serv}
=
\alpha P_mG_{\rm t}X_mH_{0,m}R_m^{-2}.
\end{equation}
Otherwise, \(I_m^{\rm serv}=0\).
Let \(\alpha\triangleq\left(\frac{c}{4\pi f_c}\right)^2\) denote the free-space power-gain coefficient, where \(c\) is the speed of light and \(f_c\) is the carrier frequency.

Unlike a non-associated serving satellite, a non-serving satellite is not beam-aligned with the typical footprint and therefore contributes through the transmit side lobe \(g_{\rm t}\).
Since the user receive beam is aligned with the associated serving satellite, its leakage is further modeled through the receive side lobe \(g_{\rm r}\), yielding the fixed gain product \(g_{\rm t}g_{\rm r}\).
The aggregate non-serving interference from shell \(m\) is
\begin{equation}
I_m^{\rm non}
=
\sum_{\mathbf d_{j,m}\in\Phi_m^{\rm non}}
\alpha P_mg_{\rm t}g_{\rm r}H_{j,m}
\|\mathbf d_{j,m}-\mathbf u\|^{-2}.
\end{equation}
This fixed side-lobe model captures the aggregate leakage from satellites whose beams are not directed toward the typical footprint.

When the typical user is associated with shell \(n\), shell-wise SIC cancels only lower-shell serving signals, while all non-serving signals remain as side-lobe interference. After serving signals from shells \(1,\ldots,i\) have been removed, the residual interference is
\begin{equation}\label{eq:residual_interference_serv_non}
\mathcal{I}_n^{(i)}
=
\sum_{\substack{m=i+1\\m\neq n}}^{N}
I_m^{\rm serv}
+
\sum_{m=1}^{N}
I_m^{\rm non},
\end{equation}
where \(i=0,\ldots,n-1\). Thus, \(\mathcal I_n^{(0)}\) is the interference before SIC, and \(\mathcal I_n^{(n-1)}\) is the residual interference after all lower-shell serving signals have been canceled or skipped. This serving/non-serving decomposition captures the two effects of satellite density: larger \(\lambda_n\) shortens the serving distance but increases aggregate side-lobe interference.

\section{Shell-wise Association Probability}\label{sec:shell-wise association prob}

We now characterize the probability that the typical user is associated with shell \(n\). Conditioned on the existence event \(\mathcal E_n\), the serving satellite is the nearest visible satellite in shell \(n\). Hence, for \(h_n<r<r_{\max,n}\),
\begin{equation}\label{eq:Rn_cond_pdf}
f_{R_n\mid \mathcal E_n}(r)
=
\frac{\Lambda_n'(r)e^{-\Lambda_n(r)}}{p_{\mathcal E_n}},
\end{equation}
where \(p_{\mathcal E_n}\triangleq\mathbb P(\mathcal E_n)=1-\exp(-\lambda_nA_n)\) denotes the shell-\(n\) existence probability.
Let \(T_{mn}\) denote the biased exclusion-distance scaling factor from shell \(n\) to shell \(m\):
\begin{equation}\label{eq:Tmn}
T_{mn}
=
\sqrt{\frac{P_mB_m}{P_nB_n}}.
\end{equation}
Conditioned on \(\mathcal E_n\) and \(R_n=r\), shell \(m\) beats shell \(n\) if \(P_mB_mR_m^{-2}>P_nB_nr^{-2}\), or equivalently \(R_m<T_{mn}r\). Since the serving satellite in shell \(m\) is the nearest visible satellite, shell \(m\) does not beat shell \(n\) when \(\Phi_m\cap\mathcal A_m(T_{mn}r)=\varnothing\). This event also includes the case \(\Phi_m=\varnothing\), so \(R_m\) need not be defined when shell \(m\) has no visible satellite.

Based on \eqref{eq:association_rule_serving}, the shell-\(n\) association event is
\begin{equation}
E_n
=
\mathcal E_n
\cap
\bigcap_{m\neq n}
\left\{
\Phi_m\cap\mathcal A_m(T_{mn}R_n)=\varnothing
\right\}. 
\end{equation}
Here, \(E_n\) includes the existence of a serving satellite in shell \(n\), and ties occur with probability zero. Let \(p_n=\mathbb P(E_n)\).

\begin{theorem}[Shell-wise association probability]
\label{thm:shell_wise_association_probability}
The association probability of shell \(n\) is
\begin{equation}\label{eq:shell_wise_association_probability}
p_n
=
\int_{h_n}^{r_{\max,n}}
\Lambda_n'(r)e^{-\Lambda_n(r)}
\exp\!\left[
-\sum_{m\neq n}\Lambda_m(T_{mn}r)
\right]dr.
\end{equation}
\end{theorem}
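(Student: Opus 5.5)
The plan is to compute \(p_n=\mathbb P(E_n)\) by conditioning on the serving distance \(R_n\) of shell \(n\), and then using the mutual independence of the SPPPs \(\{\bar\Phi_m\}_{m=1}^N\) to factor the remaining events.

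First, write
\begin{equation}
p_n=\mathbb P(\mathcal E_n)\,\mathbb P\Big(\bigcap_{m\neq n}\{\Phi_m\cap\mathcal A_m(T_{mn}R_n)=\varnothing\}\,\Big|\,\mathcal E_n\Big). \nonumber
\end{equation}
Then express the conditional probability as an integral against the conditional density \eqref{eq:Rn_cond_pdf} of \(R_n\) over \((h_n,r_{\max,n})\). Since \(R_n\) is a functional of \(\bar\Phi_n\) alone, it is independent of \(\{\bar\Phi_m\}_{m\neq n}\). Hence, conditioned on \(\mathcal E_n\) and \(R_n=r\), the events \(\{\Phi_m\cap\mathcal A_m(T_{mn}r)=\varnothing\}\) are mutually independent across \(m\neq n\). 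Each of them is a void event of the Poisson process \(\Phi_m=\bar\Phi_m\cap\mathcal A_m\) on the set \(\mathcal A_m\cap\mathcal A_m(T_{mn}r)\).

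The key step is to identify the mean measure of that set with \(\Lambda_m(T_{mn}r)\), using the piecewise definition of \(\Lambda_m\). There are three cases:
\begin{itemize}
\item If \(T_{mn}r<h_m\), the region is empty, so the void probability is \(1=e^{0}\).
\item If \(h_m\le T_{mn}r\le r_{\max,m}\), the mean count is \(\lambda_m A_m(T_{mn}r)\).
\item If \(T_{mn}r>r_{\max,m}\), visibility truncates the region to \(\mathcal A_m\), giving \(\lambda_m A_m\). This case also correctly covers the event \(\Phi_m=\varnothing\).
\end{itemize}
In all cases the void probability is \(e^{-\Lambda_m(T_{mn}r)}\), so the product over \(m\neq n\) becomes \(\exp[-\sum_{m\neq n}\Lambda_m(T_{mn}r)]\).

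Finally, multiply by \(p_{\mathcal E_n}\), which cancels the normalization in \eqref{eq:Rn_cond_pdf}. This yields \eqref{eq:shell_wise_association_probability}. Ties have zero probability because \(R_m\) has a continuous distribution, so they do not affect the event.

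The only delicate point is the case analysis of \(\Lambda_m\): one must verify that the visibility truncation and the below-altitude region \(r<h_m\) are handled consistently by the void-probability formula. The rest of the argument is direct independence and Poisson void probabilities.
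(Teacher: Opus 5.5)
Your proposal is correct and follows essentially the same route as the paper: condition on $\mathcal E_n$ and $R_n=r$, use independence across shells to factor the Poisson void probabilities $e^{-\Lambda_m(T_{mn}r)}$, then average against \eqref{eq:Rn_cond_pdf} and cancel $p_{\mathcal E_n}$. Your three-case check that the void mean equals $\Lambda_m(T_{mn}r)$ (empty cap below $h_m$, truncation at $r_{\max,m}$) and your remark that $R_n$ and $\mathcal E_n$ depend only on $\bar\Phi_n$ are useful details that the paper leaves implicit.
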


\begin{proof}
Conditioned on \(\mathcal E_n\) and \(R_n=r\), shell \(m\neq n\) does not yield a larger biased received-power metric than shell \(n\) if there is no visible satellite of shell \(m\) in \(\mathcal A_m(T_{mn}r)\). By the void probability of the SPPP,
\begin{equation}
\mathbb P\!\left(\Phi_m\cap\mathcal A_m(T_{mn}r)=\varnothing\right)
=
\exp[-\Lambda_m(T_{mn}r)]. \nonumber
\end{equation}
Since the SPPPs are independent across shells, the conditional probability that shell \(n\) is preferred over all other shells is
\begin{equation}
\mathbb P(E_n\mid \mathcal E_n,R_n=r)
=
\prod_{m\neq n}\exp[-\Lambda_m(T_{mn}r)]. \nonumber
\end{equation}
Averaging over \(R_n\) conditioned on \(\mathcal E_n\) gives
\begin{equation}
p_n
=
p_{\mathcal E_n}
\int_{h_n}^{r_{\max,n}}
f_{R_n\mid \mathcal E_n}(r)
\prod_{m\neq n}\exp[-\Lambda_m(T_{mn}r)]\,dr. \nonumber
\end{equation}
Substituting \eqref{eq:Rn_cond_pdf} yields \eqref{eq:shell_wise_association_probability}.
\end{proof}

Association is thus governed by bias-scaled exclusion regions: shell \(n\) wins if and only if every other shell is void within \(\mathcal A_m(T_{mn}r)\). Since the bias enters only through the ratio \(P_mB_m/P_nB_n\), fixing \(B_1\) is without loss of generality, and increasing \(B_n\) shrinks the competing exclusion regions and hence increases \(p_n\) monotonically. Also, \(\sum_{n}p_n=p_{\mathcal E}<1\), so \(p_{\mathcal E}\) upper bounds the rate coverage attainable by any bias vector.

\begin{lemma}[Serving-distance distribution conditioned on association]
\label{lem:serving_distance_conditioned_on_association}
For \(h_n<r<r_{\max,n}\), the probability density function (PDF) of the serving distance conditioned on association with shell \(n\) is
\begin{equation}\label{eq:Rn_cond_En_pdf}
f_{R_n\mid E_n}(r)
=
\frac{
\Lambda_n'(r)e^{-\Lambda_n(r)}
\exp\!\left[-\sum_{m\neq n}\Lambda_m(T_{mn}r)\right]
}{
p_n
}.
\end{equation}
\end{lemma}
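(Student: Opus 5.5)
The plan is to apply Bayes' rule to the joint law of \(R_n\) and the association event \(E_n\), reusing the conditional computation from the proof of Theorem~\ref{thm:shell_wise_association_probability}. Since \(E_n\subseteq\mathcal E_n\), the event \(\{R_n\le r\}\cap E_n\) is well defined. I will first compute its probability and then differentiate in \(r\).

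Write \(\mathbb P(R_n\le r, E_n)=\mathbb P(\mathcal E_n)\,\mathbb P(R_n\le r, E_n\mid \mathcal E_n)\). Conditioning further on \(R_n=s\) gives
\begin{equation}
\mathbb P(R_n\le r,E_n)=p_{\mathcal E_n}\int_{h_n}^{r} f_{R_n\mid\mathcal E_n}(s)\,\mathbb P(E_n\mid\mathcal E_n,R_n=s)\,ds. \nonumber
\end{equation}
The conditional probability inside the integral is exactly the product \(\prod_{m\neq n}\exp[-\Lambda_m(T_{mn}s)]\) derived in the proof of Theorem~\ref{thm:shell_wise_association_probability}. That derivation uses the void probability of each shell-\(m\) SPPP together with independence across shells. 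Independence is what guarantees that conditioning on \(R_n=s\), an event that depends only on \(\bar\Phi_n\), does not alter the laws of \(\Phi_m\) for \(m\neq n\).

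Next I substitute \eqref{eq:Rn_cond_pdf}, which cancels the factor \(p_{\mathcal E_n}\). This leaves
\begin{equation}
\mathbb P(R_n\le r,E_n)=\int_{h_n}^{r}\Lambda_n'(s)e^{-\Lambda_n(s)}\exp\!\left[-\sum_{m\neq n}\Lambda_m(T_{mn}s)\right]ds. \nonumber
\end{equation}
By the fundamental theorem of calculus, the joint density of \(R_n\) on \(E_n\) is the integrand evaluated at \(r\), for every \(h_n<r<r_{\max,n}\). The integrand is continuous there except at the finitely many kinks of \(\Lambda_m(T_{mn}\cdot)\), so the density holds almost everywhere, which is all a PDF requires. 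Dividing by \(p_n=\mathbb P(E_n)\), given by Theorem~\ref{thm:shell_wise_association_probability}, yields \eqref{eq:Rn_cond_En_pdf}. As a consistency check, integrating the result over \((h_n,r_{\max,n})\) gives one, since the numerator integrates to exactly \eqref{eq:shell_wise_association_probability}.

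The only delicate point is justifying that the conditional probability of \(E_n\) given \(R_n=s\) takes the product form. This needs the independence of \(\{\bar\Phi_m\}_{m\neq n}\) from \(\bar\Phi_n\), and it needs the event \(\{\Phi_m\cap\mathcal A_m(T_{mn}s)=\varnothing\}\) to cover correctly both the case where shell \(m\) is empty and the case where \(T_{mn}s\) falls outside \([h_m,r_{\max,m}]\). The piecewise definition of \(\Lambda_m\) handles both cases automatically, since \(\Lambda_m\) is \(0\) below \(h_m\) and saturates at \(\lambda_mA_m\) above \(r_{\max,m}\). Beyond this check the argument is routine.
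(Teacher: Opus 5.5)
Your proposal is correct and follows the paper's own route. The paper applies Bayes' rule with the joint density factored as \(\mathbb P(E_n\mid\mathcal E_n,R_n=r)\,\Lambda_n'(r)e^{-\Lambda_n(r)}\), reusing the void-probability product from Theorem~\ref{thm:shell_wise_association_probability}; your passage through the joint CDF just makes the differentiation step explicit.

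One small refinement: each \(\Lambda_m\) is continuous, since its kinks are only in the derivative. The integrand is therefore continuous on all of \((h_n,r_{\max,n})\), and the density identity holds at every point there, not merely almost everywhere.
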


\begin{proof}
By Bayes' rule, \(f_{R_n\mid E_n}(r)=f_{R_n,E_n}(r)/p_n\), where \(f_{R_n,E_n}(r)\) is the joint density of \(R_n\) and \(E_n\). Since \(E_n\subseteq\mathcal E_n\),
\begin{equation}
f_{R_n,E_n}(r)
=
\mathbb P(E_n\mid \mathcal E_n,R_n=r)
f_{R_n,\mathcal E_n}(r). 
\end{equation}
The first factor is the conditional association probability derived in the proof of Theorem~\ref{thm:shell_wise_association_probability}. The second factor is \(f_{R_n,\mathcal E_n}(r)=p_{\mathcal E_n}f_{R_n\mid\mathcal E_n}(r)=\Lambda_n'(r)e^{-\Lambda_n(r)}\). Combining these terms gives \eqref{eq:Rn_cond_En_pdf}.
\end{proof}

Relative to \eqref{eq:Rn_cond_pdf}, conditioning on association introduces a nonincreasing factor in \(r\), reflecting that shell \(n\) is more likely to be selected when its serving satellite is close. Increasing \(B_n\) reduces \(T_{mn}\), shrinking the distance range over which shell \(m\) can provide a larger biased received-power metric than shell \(n\), and thereby allowing shell \(n\) to be selected at larger serving distances. This conditional density provides the averaging law used throughout Section~\ref{sec:rate coverage}.

\section{Rate Coverage}\label{sec:rate coverage}
This section characterizes rate coverage, defined as the probability that the typical user can support a rate threshold \(\gamma\) through direct decoding or the subsequent SIC procedure.
The probability is unconditional over the satellite point processes, so the typical user is in outage when no visible serving satellite exists.
Since the associated shell determines both the resource-sharing load and the SIC order, the analysis first derives the shell-conditioned rate coverage and then averages it over the shell-wise association probabilities.


\subsection{Resource Allocation}

In the proposed serving/non-serving model, at most one satellite per shell is beam-aligned with the typical footprint over the considered time-frequency resource block.
Accordingly, each occupied shell provides one shell-specific serving link, and the users associated with shell \(n\) share the resources of its serving satellite.

Let \( K_u\) denote the number of candidate users in the typical footprint, excluding the typical user. A large \(K_u\) thus represents a traffic hotspot, in which many users contend for the resources of the same serving satellites.
Let \(\delta\in[0,1]\) denote the concurrent-demand fraction, i.e., the fraction of associated candidate users with active downlink demand in the considered time-frequency resource block.
Since \(p_n\) denotes the shell-\(n\) association probability, the resource-sharing load of the shell-\(n\) serving satellite, including the typical user, is defined as
\begin{equation}\label{eq:shell_load}
L_n=1+\delta K_u p_n.
\end{equation}
The unit term accounts for the tagged user served by shell \(n\), whereas \(\delta K_up_n\) represents the shell-\(n\) share of the other candidate users with concurrent downlink demand. Candidate-user associations are approximated as independent categorical trials, where a candidate user associates with shell \(n\) with probability \(p_n\) and remains unserved with probability \(1-p_{\mathcal E}\).

For the considered time--frequency resource block, \(L_n\) is modeled as a deterministic, possibly noninteger resource-sharing factor rather than an instantaneous user count.
Define the signal-to-interference-plus-noise ratio (SINR) after \(i\) cancellation stages as
\begin{equation}\label{eq:sinr}
\mathrm{SINR}_n^{(i)}
=
\frac{\alpha P_n G_{\rm t} G_{\rm r} H_{0,n} R_n^{-2}}
{\mathcal I_n^{(i)}+\sigma^2}.
\end{equation}
Within each shell, users associated with the serving satellite are allocated equal time fractions over the full bandwidth \(W\), while different shells reuse the same bandwidth.
Thus, a user associated with shell \(n\) receives the time fraction \(1/L_n\), and its achievable rate after \(i\) cancellation stages is
\begin{equation}\label{eq:rate_n}
    \mathcal R_n^{(i)}
    =
    \frac{W}{L_n}
    \log_2\!\left(1+\mathrm{SINR}_n^{(i)}\right).
\end{equation}
The rate threshold \(\gamma\) is met at an admissible stage \(i\) when \(\mathcal R_n^{(i)}>\gamma\), or equivalently when \(\mathrm{SINR}_n^{(i)}>\tau_n\), where
\begin{equation}\label{eq:tau_n}
\tau_n=\tau_n(\gamma)
=
2^{L_n\gamma/W}-1.
\end{equation}
Since \(L_n\) depends on the association probability \(p_n\), the required SINR threshold is shell-dependent and indirectly controlled by the bias vector \(\mathbf B\).

The decoding threshold \(\tau_n\) is determined by the rate threshold and shell-\(n\) resource-sharing load and remains fixed throughout the SIC procedure.
The receiver first attempts direct decoding using \(\mathrm{SINR}_n^{(0)}\); if this fails, it retries the same codeword after successive interference cancellation.
Thus, SIC changes the decoding SINR through interference reduction, but not the rate threshold or required threshold.


\begin{figure}
    \centering
    \includegraphics[width=1\linewidth]{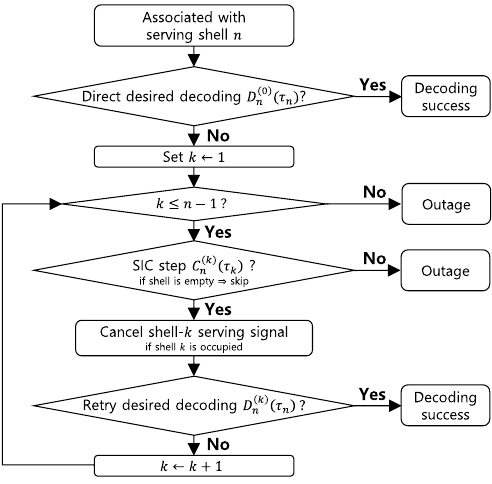}
    \caption{Block diagram of the SIC-aided decoding process for a user associated with shell $n$.}
    \label{fig:decoding_process}
\end{figure}

\subsection{Decoding Process}

Suppose that the typical user is associated with the serving satellite in shell \(n\). After the serving signals from shells \(1,\ldots,i\) have been canceled, the desired-link decodability event is
\begin{equation}\label{eq:D_event_new}
D_n^{(i)}(\tau_n)
=
\left\{
\mathrm{SINR}_n^{(i)}
>
\tau_n
\right\},
\end{equation}
where \(i=0,\ldots,n-1\). The case \(i=0\) corresponds to direct decoding without SIC. The noise power is \(\sigma^2=\kappa T_{\rm temp} W\), where \(\kappa\) is Boltzmann's constant, \(T_{\rm temp}\) is the system noise temperature, and $W$ is the bandwidth.

For \(k=1,\ldots,n-1\), let \(C_n^{(k)}(\tau_k)\) denote the event that shell \(k\) is empty or that its serving signal is decoded after the serving signals from shells \(1,\ldots,k-1\) have been canceled:
\begin{equation}\label{eq:C_event_new}
C_n^{(k)}(\tau_k)
=
\left\{
\Phi_k^{\rm serv}=\varnothing
\right\}
\cup
\left\{
\frac{
I_k^{\rm serv}
}{
\mathcal F_n^{(k)}+\sigma^2
}
>
\tau_k
\right\}.
\end{equation}
An empty lower shell is therefore skipped in the SIC procedure. The interference observed when decoding the shell-\(k\) serving signal is
\begin{equation}\label{eq:F_interference_new}
\mathcal F_n^{(k)}
=
\alpha P_n G_{\rm t}G_{\rm r}H_{0,n}R_n^{-2}
+
\mathcal I_n^{(k)}.
\end{equation}
Thus, while decoding the shell-\(k\) serving signal, the desired signal from shell \(n\), the uncanceled serving signals from shells \(k+1,\ldots,N\) excluding shell \(n\), and all non-serving satellite signals remain as interference. The threshold used in \(C_n^{(k)}\) is \(\tau_k=2^{L_k\gamma/W}-1\), since the signal being decoded is the serving signal of shell \(k\), whose satellite serves its own users at the same rate threshold \(\gamma\) under the shell-\(k\) load \(L_k\).
The overall decoding procedure is summarized in Fig.~\ref{fig:decoding_process}, where lower-shell serving signals are successively processed after direct decoding failure. For $n=1$, the procedure reduces to direct decoding.

For compact notation, we write \(D_n^{(i)}\) and \(C_n^{(k)}\) when the thresholds are clear. Unless otherwise stated, all probabilities, expectations, and Laplace transforms in this section are conditioned on \(E_n\), the event that the typical user is associated with shell \(n\); for example, \(\mathbb P(D_n^{(i)})\) denotes \(\mathbb P(D_n^{(i)}\mid E_n)\).


\begin{remark}[Shell-Wise SIC Ordering]
The SIC order is predetermined by shell altitude and proceeds from shell \(1\) to shell \(n-1\), rather than being determined by the instantaneous received powers.
This ordering exploits the systematically shorter propagation distances of lower-shell serving links, which tend to produce stronger inter-shell interference for users associated with upper shells, while avoiding realization-dependent interferer ranking.
Although fading and receive-antenna gains can alter the instantaneous power ordering, shell altitude still provides a geometry-driven large-scale cue for received-power ordering and thus a simple and stable basis for SIC.
SIC is purely a receiver-side enhancement and is not assumed in codeword or rate selection; under the ideal cancellation model, only successfully decoded signals are removed.
\end{remark}

\subsection{Mathematical Preliminaries}

This subsection introduces the conditional transforms and distributions used to evaluate the joint SIC events. Unless stated otherwise, all probabilities, expectations, and transforms are conditioned on the association event \(E_n\). For the residual interference we write \(\mathcal L_{\mathcal I_n^{(0)}\mid E_n,R_n}(s\mid r)=\mathbb E[e^{-s\mathcal I_n^{(0)}}\mid E_n,R_n=r]\) under the scalar conditioning used for direct decoding, and \(\mathcal L_{\mathcal I_n^{(i)}\mid E_n,\mathbf R}(s\mid\mathbf r)=\mathbb E[e^{-s\mathcal I_n^{(i)}}\mid E_n,\mathbf R=\mathbf r]\) for the joint SIC analysis, in which the complete serving-distance vector \(\mathbf R=\mathbf r\) is fixed.

The following link-level factors are obtained by averaging over the fading and antenna-gain variables of each interfering link.
The probability generating functional of the SPPP is then used to construct the aggregate interference transforms. For a non-serving satellite in shell \(m\) at distance \(v\),
\begin{align}\label{eq:M_non_joint}
M_m^{\rm non}(s,v)
&\triangleq
\mathbb E_H\!\left[
e^{-s \alpha P_mg_{\rm t}g_{\rm r}Hv^{-2}}
\right]
\nonumber\\
&=
\left(
1+\frac{s \alpha P_mg_{\rm t}g_{\rm r}v^{-2}}{m_0}
\right)^{-m_0}.
\end{align}
For a non-associated serving satellite,
\begin{align}
M_m^{\rm serv}(s,v)
&\triangleq
\mathbb E_{X_m,H}\!\left[
e^{-s \alpha P_mG_{\rm t}X_mHv^{-2}}
\right]
\nonumber\\
&=
p_{\rm r}
\left(
1+\frac{s\alpha P_mG_{\rm t}G_{\rm r}v^{-2}}{m_0}
\right)^{-m_0} 
\nonumber\\
&\quad+
(1-p_{\rm r})
\left(
1+\frac{s\alpha P_mG_{\rm t}g_{\rm r}v^{-2}}{m_0}
\right)^{-m_0}. \nonumber
\end{align}
The simple closed forms above follow from the Nakagami-\(m_0\) fading model, under which the power gain \(H\sim{\rm Gamma}(m_0,1/m_0)\) has Laplace transform \((1+s/m_0)^{-m_0}\).
Conditioned on the nearest visible satellite of shell \(m\) being at distance \(z<\infty\), the remaining shell-\(m\) satellites lie outside \(z\). Their aggregate non-serving interference has conditional transform
\begin{align}
\Psi_m^{\rm non}(s,z)
&\triangleq
\mathbb E\!\left[
e^{-sI_m^{\rm non}}\mid R_m=z
\right]
\nonumber\\
&=
\exp\!\left(
-\int_z^{r_{\max,m}}
\left[1-M_m^{\rm non}(s,v)\right]
d\Lambda_m(v)
\right). \nonumber
\end{align}

Conditioned on \(E_n\) and \(R_n=r\), the association rule excludes shell-\(m\) satellites closer than \(T_{mn}r\). Since feasible shell-\(m\) distances start at \(h_m\), define
\begin{equation}
\rho_{m,n}(r)\triangleq\max\{h_m,T_{mn}r\},
\qquad m\neq n. \nonumber
\end{equation}
By independent increments of the SPPP, the point process outside \(\rho_{m,n}(r)\) remains unchanged. For \(\rho_{m,n}(r)<z<r_{\max,m}\),
\begin{align}
\mathbb P(R_m>z\mid E_n,R_n=r)
&=
\exp\!\left(
-\Lambda_m(z)+\Lambda_m(\rho_{m,n}(r))
\right). \nonumber
\end{align}
The finite-distance component of the conditional serving-distance law is
\begin{align}\label{eq:conditional_Rm_finite_density_joint}
f_{R_m\mid E_n,R_n}^{\rm fin}(z\mid r)
&=
\mathds{1}_{\{\rho_{m,n}(r)<z<r_{\max,m}\}}
\Lambda_m'(z)
\nonumber\\
&\quad\times
\exp\!\left(
-\Lambda_m(z)+\Lambda_m(\rho_{m,n}(r))
\right).
\end{align}
The remaining admissible region may be empty; this case is represented by \(R_m=\infty\), with probability
\begin{align}\label{eq:void_probability_joint}
p_{m,n}^{\rm void}(r)
&\triangleq
\mathbb P(R_m=\infty\mid E_n,R_n=r)
\nonumber\\
&=
\exp\!\left(
-\Lambda_m(r_{\max,m})+\Lambda_m(\rho_{m,n}(r))
\right).
\end{align}
If \(\rho_{m,n}(r)\ge r_{\max,m}\), then \(p_{m,n}^{\rm void}(r)=1\) and the finite-distance density is zero. Since the finite component carries mass \(1-p_{m,n}^{\rm void}(r)\), \eqref{eq:conditional_Rm_finite_density_joint} and \eqref{eq:void_probability_joint} specify the conditional law of \(R_m\) without explicitly writing a Dirac mass at infinity, and conditioned on \(E_n\) and \(R_n=r\) the non-associated serving distances are independent across shells.

For direct decoding, all non-associated serving signals remain as interference. If \(R_m=\infty\), shell \(m\) contributes no interference. If \(R_m=z<\infty\), the nearest satellite contributes \(M_m^{\rm serv}(s,z)\), while the remaining satellites contribute \(\Psi_m^{\rm non}(s,z)\). Therefore,
\begin{align}\label{eq:direct_shell_transform_joint}
&\bar{\mathcal L}_{m\mid n}^{(0)}(s\mid r)
=
p_{m,n}^{\rm void}(r)
\nonumber\\
&+
\int_{\rho_{m,n}(r)}^{r_{\max,m}}
M_m^{\rm serv}(s,z)
\Psi_m^{\rm non}(s,z)
f_{R_m\mid E_n,R_n}^{\rm fin}(z\mid r)dz.
\end{align}
Consequently,
\begin{align}\label{eq:direct_residual_LT_joint}
\mathcal L_{\mathcal I_n^{(0)}\mid E_n,R_n}(s\mid r)
&=
\Psi_n^{\rm non}(s,r)
\prod_{m\neq n}
\bar{\mathcal L}_{m\mid n}^{(0)}(s\mid r).
\end{align}

\begin{lemma}[Direct-decoding contribution]\label{lem:direct_decoding_joint}
For integer \(m_0\ge 1\), the direct-decoding contribution is
\begin{align}\label{eq:direct_decoding_joint}
&q_{n,0}
=\mathbb P(D_n^{(0)}) = \int_{h_n}^{r_{\max,n}}
f_{R_n\mid E_n}(r)
\nonumber\\
&\times
\sum_{q=0}^{m_0-1}
\left.
\frac{(-s)^q}{q!}
\frac{d^q}{ds^q}
\left[
e^{-s\sigma^2}
\mathcal L_{\mathcal I_n^{(0)}\mid E_n,R_n}(s\mid r)
\right]
\right|_{s=s_n(r)}
dr,
\end{align}
where
\begin{equation}
s_n(r)
\triangleq
\frac{m_0\tau_n r^2}{\alpha P_nG_{\rm t}G_{\rm r}}. \nonumber
\end{equation}
\end{lemma}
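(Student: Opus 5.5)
We follow the standard Nakagami-$m_0$ coverage argument: condition on the serving distance, use the Gamma CCDF \eqref{eq:ccdf gamma} on the desired-link fading, and then convert the resulting polynomial-times-exponential moments of the interference into derivatives of its Laplace transform. All probabilities are conditioned on $E_n$, as agreed in Section~\ref{sec:rate coverage}.

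\textbf{Step 1: condition on the serving distance.} Fix $R_n=r$ under $E_n$, with $r$ distributed according to Lemma~\ref{lem:serving_distance_conditioned_on_association}. Write $Z=\mathcal I_n^{(0)}+\sigma^2$. From \eqref{eq:sinr} and \eqref{eq:D_event_new}, the event $D_n^{(0)}$ is
\begin{equation}
\left\{H_{0,n}>\frac{\tau_n r^2 Z}{\alpha P_nG_{\rm t}G_{\rm r}}\right\}
=\left\{H_{0,n}>\frac{s_n(r)Z}{m_0}\right\}. \nonumber
\end{equation}
The serving fading $H_{0,n}$ is independent of the point processes, the association event, the antenna gains $X_m$, and the interfering fading gains. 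Hence it is independent of $Z$ given $(E_n,R_n=r)$. Applying \eqref{eq:ccdf gamma} with $x=s_n(r)Z/m_0$ gives
\begin{equation}
\mathbb P(D_n^{(0)}\mid E_n,R_n=r)=\sum_{q=0}^{m_0-1}\frac{s_n(r)^q}{q!}\,\mathbb E\!\left[Z^q e^{-s_n(r)Z}\,\middle|\,E_n,R_n=r\right]. \nonumber
\end{equation}

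\textbf{Step 2: turn moments into derivatives.} Define $g(s)=\mathbb E[e^{-sZ}\mid E_n,R_n=r]$. Since $\sigma^2$ is deterministic,
\begin{equation}
g(s)=e^{-s\sigma^2}\mathcal L_{\mathcal I_n^{(0)}\mid E_n,R_n}(s\mid r). \nonumber
\end{equation}
Because $Z\ge\sigma^2>0$, we can differentiate under the expectation for $s>0$. This gives $\mathbb E[Z^qe^{-sZ}]=(-1)^q g^{(q)}(s)$. Substituting into Step~1 turns each term into $\frac{(-s)^q}{q!}g^{(q)}(s)$ evaluated at $s=s_n(r)$, which is exactly the summand in \eqref{eq:direct_decoding_joint}.

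\textbf{Step 3: identify the transform and integrate.} The conditional transform of $\mathcal I_n^{(0)}$ is the one in \eqref{eq:direct_residual_LT_joint}, which was built just before the lemma. It rests on three facts:
\begin{itemize}
\item conditioned on $E_n$ and $R_n=r$, the shells $m\neq n$ are independent, with laws given by \eqref{eq:conditional_Rm_finite_density_joint} and \eqref{eq:void_probability_joint};
\item shell $n$'s non-serving satellites lie beyond $r$, which gives the factor $\Psi_n^{\rm non}(s,r)$;
\item for each shell $m$, the probability generating functional of the SPPP gives the factor in \eqref{eq:direct_shell_transform_joint}.
\end{itemize}
We then integrate the conditional probability against $f_{R_n\mid E_n}(r)$ over $(h_n,r_{\max,n})$, which yields $q_{n,0}$.

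\textbf{Main obstacle.} The delicate point is justifying that $\Psi_n^{\rm non}(s,r)$ is the correct shell-$n$ factor once we condition on $E_n$ and not only on $R_n$. This holds because $E_n$ depends on shell $n$ only through $R_n$: given $R_n=r$, the event $E_n$ constrains only the other shells. As a result, shell $n$'s points beyond $r$ remain a Poisson process restricted to $(r,r_{\max,n})$, independent of the other shells. The remaining steps are routine.
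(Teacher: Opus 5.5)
Your proof is correct and follows essentially the same route as the paper: condition on $R_n=r$ under $E_n$, apply the integer-shape Gamma CCDF to $H_{0,n}$, rewrite $\mathbb E[Z^q e^{-sZ}]$ as $(-1)^q$ times the $q$-th derivative of $e^{-s\sigma^2}\mathcal L_{\mathcal I_n^{(0)}\mid E_n,R_n}(s\mid r)$, and then average over $f_{R_n\mid E_n}$. Your extra justifications are details the paper leaves implicit: the independence of $H_{0,n}$ from $Z$ given $E_n$ and $R_n$, the interchange of differentiation and expectation, and the observation that $E_n$ constrains shell $n$ only through $R_n$; note that the interchange really rests on $s>0$ and $Z\ge 0$ keeping $z^q e^{-sz}$ bounded, not on $Z\ge\sigma^2$.
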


\begin{proof}
Conditioned on \(E_n\), \(R_n=r\), and \(U=\mathcal I_n^{(0)}+\sigma^2\), direct decoding requires \(H_{0,n}>\tau_n r^2U/(\alpha P_nG_{\rm t}G_{\rm r})\). Using the integer-shape Gamma CCDF in \eqref{eq:ccdf gamma} and the identity \(\mathbb E[U^qe^{-sU}]=(-1)^q d^q\mathcal L_U(s)/ds^q\) gives the conditional summation in \eqref{eq:direct_decoding_joint}. Averaging over \eqref{eq:Rn_cond_En_pdf} completes the proof.
\end{proof}

For the joint SIC stages, let \(\mathbf R=(R_1,\ldots,R_N)\), where \(R_n=r_n<\infty\). For each \(m\neq n\), the conditional law of \(R_m\) is specified by \eqref{eq:void_probability_joint} and \eqref{eq:conditional_Rm_finite_density_joint}. If \(R_m=\infty\), shell \(m\) contributes neither serving nor non-serving interference. Define
\begin{equation}
Z_{n,i}
\triangleq
\sigma^2+\mathcal I_n^{(i)}. \nonumber
\end{equation}

\begin{lemma}[Full-distance-conditioned residual transform]\label{lem:joint_residual_transform}
Conditioned on \(E_n\) and \(\mathbf R=\mathbf r\), the Laplace transform of \(\mathcal I_n^{(i)}\) is
\begin{align}\label{eq:joint_residual_LT}
&\mathcal L_{\mathcal I_n^{(i)}\mid E_n,\mathbf R}(s\mid\mathbf r)
\nonumber\\
&=
\Psi_n^{\rm non}(s,r_n)
\prod_{\substack{m\neq n\\r_m<\infty}}
\Psi_m^{\rm non}(s,r_m)
\prod_{\substack{m=i+1\\m\neq n,\ r_m<\infty}}^{N}
M_m^{\rm serv}(s,r_m).
\end{align}
Consequently,
\begin{align}\label{eq:joint_noise_inclusive_LT}
\mathcal L_{Z_{n,i}\mid E_n,\mathbf R}(s\mid\mathbf r)
&=
e^{-s\sigma^2}
\mathcal L_{\mathcal I_n^{(i)}\mid E_n,\mathbf R}(s\mid\mathbf r).
\end{align}
\end{lemma}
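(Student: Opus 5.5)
We decompose \(\mathcal I_n^{(i)}\) via \eqref{eq:residual_interference_serv_non} into its serving and non-serving parts. We then show that, conditioned on \(E_n\) and \(\mathbf R=\mathbf r\), all summands are conditionally independent. This lets the transform factor into one term per shell and one term per uncanceled serving link.

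\textbf{Step 1 (conditional structure).} Conditioning on \(\mathbf R=\mathbf r\) means the following for each shell \(m\) with \(r_m<\infty\): the nearest visible satellite lies at distance \(r_m\), and no visible shell-\(m\) satellite lies in \(\mathcal A_m(r_m)\).

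For \(m\neq n\), the event \(E_n\) only constrains shell \(m\) to be void within \(\mathcal A_m(T_{mn}r_n)\). Since \(r_m>\rho_{m,n}(r_n)\) on the support of \eqref{eq:conditional_Rm_finite_density_joint}, this constraint is already implied by \(R_m=r_m\). Hence it imposes nothing further.

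By the independent-increments property of the SPPP, given \(R_m=r_m\) the process \(\Phi_m^{\rm non}\) is a Poisson process on \(\mathcal A_m\setminus\mathcal A_m(r_m)\) with intensity measure \(d\Lambda_m\) restricted to \((r_m,r_{\max,m})\). This is exactly the law underlying \(\Psi_m^{\rm non}(s,r_m)\), including for \(m=n\).

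If \(r_m=\infty\), shell \(m\) has no visible satellites. Both \(I_m^{\rm serv}\) and \(I_m^{\rm non}\) then vanish and contribute a factor of one, which is why the products run only over \(r_m<\infty\).

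Independence across shells follows from the mutual independence of \(\{\bar\Phi_m\}\), given that \(E_n\) factors into shell-wise void events.

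\textbf{Step 2 (link-level averaging).} Conditioned on \(\mathbf R=\mathbf r\), each uncanceled serving term \(I_m^{\rm serv}\) with \(m\ge i+1\), \(m\neq n\), depends only on \(X_m\) and \(H_{0,m}\). These are independent of everything else by the model assumptions. Its transform is therefore \(M_m^{\rm serv}(s,r_m)\), since the expectation over \(X_m\) and \(H\) is exactly the definition of that quantity.

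The canceled shells \(m\le i\) simply drop out of the serving sum. The non-serving sum is the same for every \(i\).

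Since the fading marks \(H_{j,m}\) are i.i.d. and independent of the points, \(\mathbb E[e^{-sI_m^{\rm non}}\mid R_m=r_m]\) follows from the probability generating functional of the marked Poisson process. It equals \(\exp(-\int_{r_m}^{r_{\max,m}}[1-M_m^{\rm non}(s,v)]d\Lambda_m(v))=\Psi_m^{\rm non}(s,r_m)\).

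\textbf{Step 3 (assembly).} Multiply the independent factors to obtain \eqref{eq:joint_residual_LT}. Then \eqref{eq:joint_noise_inclusive_LT} follows immediately because \(\sigma^2\) is deterministic, so \(\mathbb E[e^{-s(\sigma^2+\mathcal I)}]=e^{-s\sigma^2}\mathcal L_{\mathcal I}(s)\).

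The main obstacle is Step 1. We must argue carefully that conditioning on the association event \(E_n\), together with the full distance vector, leaves each shell's non-serving process Poisson outside \(r_m\), and keeps the shells independent.

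The argument we would use is that \(E_n\cap\{\mathbf R=\mathbf r\}\) is an intersection of events, each measurable with respect to a single shell. Each such event is determined by the void region \(\mathcal A_m(r_m)\), which contains \(\mathcal A_m(\rho_{m,n}(r_n))\), together with a point at distance \(r_m\). Palm/Slivnyak-type reasoning for the nearest point then gives the conditional Poisson law on the complement.
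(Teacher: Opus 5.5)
Your proposal is correct and follows essentially the same route as the paper's proof. Both arguments split \(\mathcal I_n^{(i)}\) into serving and non-serving parts, use independent increments to keep each shell's non-serving process Poisson outside \(r_m\), and average each uncanceled serving link to get \(M_m^{\rm serv}(s,r_m)\). Both then take the product by cross-shell independence and multiply by \(e^{-s\sigma^2}\). Your check that the association constraint \(\Phi_m\cap\mathcal A_m(T_{mn}r_n)=\varnothing\) is already implied by \(R_m=r_m>\rho_{m,n}(r_n)\) makes explicit a step the paper leaves implicit.
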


\begin{proof}
    See Appendix \ref{proof:joint_residual_transform}.
\end{proof}


The transform in Lemma~\ref{lem:joint_residual_transform} characterizes the residual interference in the transform domain. To evaluate the joint decoding event, we need the probability that the noise-inclusive residual term lies in an interval. 
For a fixed $\mathbf r$, define
\begin{align}
F_{\mathcal I_n^{(i)}|\mathbf r}(\xi)
&\triangleq
\mathbb{P}\!\left(
\mathcal I_n^{(i)}\le \xi
\,\middle|\,
E_n,\mathbf R=\mathbf r
\right),\; \mathrm{and} \\
F_{Z_{n,i}|\mathbf r}(z)
&\triangleq
\mathbb{P}\!\left(
Z_{n,i}\le z
\,\middle|\,
E_n,\mathbf R=\mathbf r
\right).
\end{align}

\begin{lemma}[Conditional distribution of the residual interference]\label{lem:conditional_residual_cdf}
Under the continuous Nakagami fading model, the conditional cumulative distribution function (CDF) of \(\mathcal I_n^{(i)}\) is
\begin{align}\label{eq:gil_pelaez_joint_cdf}
&F_{\mathcal I_n^{(i)}\mid\mathbf r}(\xi)
\nonumber\\
&=
\begin{cases}
0, & \xi<0,\\[1mm]
\displaystyle
\lim_{s\to+\infty}
\mathcal L_{\mathcal I_n^{(i)}\mid E_n,\mathbf R}
(s\mid\mathbf r),
& \xi=0,\\[3mm]
\displaystyle
\frac{1}{2}
-
\frac{1}{\pi}
\int_0^\infty
\frac{
\operatorname{Im}\!\left[
e^{-\mathrm j\omega\xi}
\mathcal L_{\mathcal I_n^{(i)}\mid E_n,\mathbf R}
(-\mathrm j\omega\mid\mathbf r)
\right]
}{\omega}
d\omega,
& \xi>0.
\end{cases}
\end{align}
The corresponding noise-inclusive CDF is
\begin{align}\label{eq:joint_noise_shifted_cdf}
F_{Z_{n,i}\mid\mathbf r}(z)
=
\begin{cases}
0, & z<\sigma^2,\\[1mm]
F_{\mathcal I_n^{(i)}\mid\mathbf r}(z-\sigma^2), & z\ge \sigma^2.
\end{cases}
\end{align}
\end{lemma}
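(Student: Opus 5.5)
The plan is to apply the Gil-Pelaez inversion theorem to the conditional characteristic function of \(\mathcal I_n^{(i)}\), taking care of the atom at zero and the nonnegativity of the support.

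\textbf{Step 1: characteristic function.} Conditioned on \(E_n\) and \(\mathbf R=\mathbf r\), the variable \(\mathcal I_n^{(i)}\) is nonnegative. Lemma~\ref{lem:joint_residual_transform} gives its Laplace transform as a product of factors \(M_m^{\rm serv}\) and \(\Psi_m^{\rm non}\). Each factor is an analytic function of \(s\) on \(\operatorname{Re}s>0\) and continuous up to the imaginary axis, since every constituent \((1+cs)^{-m_0}\) with \(c>0\) is. So the characteristic function is \(\varphi(\omega)=\mathbb E[e^{\mathrm j\omega\mathcal I_n^{(i)}}]=\mathcal L_{\mathcal I_n^{(i)}\mid E_n,\mathbf R}(-\mathrm j\omega\mid\mathbf r)\). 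I would justify this by evaluating the expectation link by link, exactly as in the proof of Lemma~\ref{lem:joint_residual_transform}. The PGFL of the Poisson process applies verbatim with complex argument \(s=-\mathrm j\omega\), because \(|M_m^{\rm non}(-\mathrm j\omega,v)|\le1\) and the intensity measure is finite.

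\textbf{Step 2: cases \(\xi<0\) and \(\xi=0\).} The case \(\xi<0\) is trivial by nonnegativity. For \(\xi=0\), note that \(\mathcal I_n^{(i)}=0\) exactly when no interfering satellite is present. Fading is continuous and the gains are positive, so any present satellite contributes a strictly positive term. Dominated convergence then gives \(\mathbb E[e^{-s\mathcal I}]\to\mathbb P(\mathcal I=0)\) as \(s\to\infty\), which is the stated limit.

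\textbf{Step 3: case \(\xi>0\).} I would apply Gil-Pelaez:
\begin{equation}
F(\xi)=\tfrac12-\tfrac1\pi\int_0^\infty\operatorname{Im}[e^{-\mathrm j\omega\xi}\varphi(\omega)]/\omega\,d\omega, \nonumber
\end{equation}
valid at continuity points of \(F\). The main obstacle is that the law is a mixture of an atom at \(0\) and a component on \((0,\infty)\).

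\emph{Continuity on \((0,\infty)\).} Conditioned on the number and locations of interferers, the sum is a finite positive combination of independent Gamma variables, hence absolutely continuous whenever at least one term is present. Therefore \(F\) is continuous at every \(\xi>0\).

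\emph{Effect of the atom.} The atom at \(0\) contributes \(p_0\) to \(\varphi\), a real constant. Its Gil-Pelaez integrand is \(-p_0\sin(\omega\xi)/\omega\), which integrates to \(-p_0\pi/2\). This correctly adds \(p_0/2\) to the \(\tfrac12\) and so yields \(F(\xi)\) for \(\xi>0\). The formula therefore holds for \(\xi>0\) despite the atom. The integral is understood as an improper Riemann limit; convergence near \(\omega=0\) follows because \(\operatorname{Im}[\cdot]/\omega\) stays bounded, given a finite first moment.

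\textbf{Step 4: noise shift.} Since \(\sigma^2\) is deterministic, \(\mathbb P(Z_{n,i}\le z)=\mathbb P(\mathcal I_n^{(i)}\le z-\sigma^2)\). This value is \(0\) for \(z<\sigma^2\) and \(F_{\mathcal I_n^{(i)}\mid\mathbf r}(z-\sigma^2)\) otherwise, which gives \eqref{eq:joint_noise_shifted_cdf}.
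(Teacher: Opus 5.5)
Your proposal is correct and follows the same route as the paper. Both use nonnegativity for \(\xi<0\), dominated convergence \(e^{-s\mathcal I_n^{(i)}}\to\mathds{1}_{\{\mathcal I_n^{(i)}=0\}}\) for the atom at zero, Gil--Pelaez inversion with the Laplace transform evaluated at \(s=-\mathrm j\omega\) for \(\xi>0\), and a deterministic shift by \(\sigma^2\). Your added checks make explicit what the paper's proof leaves implicit when it invokes Gil--Pelaez ``at every continuity point'': that every \(\xi>0\) is a continuity point (the law is an atom at zero plus an absolutely continuous part), that the PGFL holds with complex argument, and that the integral converges near \(\omega=0\) because the conditional mean is finite. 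Your explicit computation of the atom's contribution is redundant but harmless, since Gil--Pelaez already holds at any continuity point of an arbitrary law.
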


\begin{proof}
See Appendix \ref{proof:conditional_residual_cdf}.
\end{proof}



The CDF in Lemma~\ref{lem:conditional_residual_cdf} is an intermediate quantity rather than a decoding probability. In the main result, conditioning on the desired and serving-interference powers converts the desired-decoding and SIC inequalities into a single interval constraint on \(Z_{n,i}\), whose probability is evaluated using \(F_{Z_{n,i}\mid\mathbf r}\).

\subsection{Main Results on Rate Coverage}

Define the cumulative SIC-success event as
\begin{equation}
\mathcal C_n^{(i)}
\triangleq
\bigcap_{k=1}^{i} C_n^{(k)}, \quad i=1,\ldots,n-1, \nonumber
\end{equation}
with \(\mathcal C_n^{(0)}\triangleq\Omega\), where $\Omega$ is the sure event. Define the incremental desired-decoding event as
\begin{equation}
\mathcal D_n^{(i)}
\triangleq
D_n^{(i)}\cap\left(D_n^{(i-1)}\right)^c, \quad i=1,\ldots,n-1, \nonumber
\end{equation}
with \(\mathcal D_n^{(0)}\triangleq D_n^{(0)}\). Since each SIC stage removes a nonnegative serving-interference term,
\begin{equation}\label{eq:D_nested_property_new}
D_n^{(0)}\subseteq D_n^{(1)}\subseteq\cdots\subseteq D_n^{(n-1)}.
\end{equation}
Thus, \(\{\mathcal D_n^{(i)}\}_{i=0}^{n-1}\) are mutually disjoint, and the shell-\(n\) successful decoding event is
\begin{equation}\label{eq:shell_n_success_event_joint}
\mathcal S_n(\gamma)
=
\bigcup_{i=0}^{n-1}
\left(\mathcal D_n^{(i)}\cap\mathcal C_n^{(i)}\right).
\end{equation}

The term \(i=0\) corresponds to direct decoding, while \(i\ge1\) corresponds to first decoding the desired signal after the \(i\)-th SIC stage with all preceding SIC steps successful. We write \(\mathcal P_{{\rm cov},n}(\gamma)\triangleq\mathbb P(\mathcal S_n(\gamma)\mid E_n)\) for the shell-conditioned rate coverage, which is averaged over the shell-wise association probabilities in Corollary~\ref{coro:rate coverage}.

To evaluate these stage-wise events, define the desired received power and lower-shell serving powers, conditioned on \(\mathbf R=\mathbf r\), as
\begin{align}
\mathsf S_n
&\triangleq
\alpha P_nG_{\rm t}G_{\rm r}H_{0,n}r_n^{-2}, \nonumber 
\\
Y_k
&\triangleq
\begin{cases}
\alpha P_kG_{\rm t}X_kH_{0,k}r_k^{-2}, & r_k<\infty,\\
0, & r_k=\infty,
\end{cases}
\quad k<n. \nonumber
\end{align}
For compactness, let
\begin{equation}
g_{m_0}(x;\mu)
\triangleq
\frac{m_0^{m_0}}{\Gamma(m_0)\mu^{m_0}}
x^{m_0-1}e^{-m_0x/\mu}, \quad x>0, \nonumber
\end{equation}
denote the Gamma density with shape \(m_0\) and mean \(\mu\). Then
\begin{equation}\label{eq:desired_power_density_joint}
f_{\mathsf S_n\mid r_n}(x)
=
g_{m_0}\!\left(x;\alpha P_nG_{\rm t}G_{\rm r}r_n^{-2}\right),
\end{equation}
and, for \(r_k<\infty\),
\begin{align}\label{eq:serving_power_density_joint}
&f_{Y_k\mid R_k}(y\mid r_k)
=
p_{\rm r}
g_{m_0}\!\left(y;\alpha P_kG_{\rm t}G_{\rm r}r_k^{-2}\right)
\nonumber\\
&\phantom{------}+
(1-p_{\rm r})
g_{m_0}\!\left(y;\alpha P_kG_{\rm t}g_{\rm r}r_k^{-2}\right),
\quad y>0.
\end{align}
When \(r_k=\infty\), \(Y_k=0\) almost surely. Conditioned on \(E_n\) and \(\mathbf R=\mathbf r\), \(\mathsf S_n\), \(\{Y_k\}_{k=1}^{i}\), and \(Z_{n,i}\) are mutually independent because they depend on disjoint fading variables, antenna-gain variables, and point-process components.

\begin{lemma}[Residual-interval representation]\label{lem:joint_interval_representation}
Define
\begin{equation}
F_{Z_{n,i}\mid\mathbf r}^{<}(t)
\triangleq
\mathbb P(Z_{n,i}<t\mid E_n,\mathbf R=\mathbf r). \nonumber
\end{equation}
Since \(\mathcal I_n^{(i)}\) is continuous on \((0,\infty)\), \(F^{<}_{Z_{n,i}\mid\mathbf r}(t)=0\) for \(t\le\sigma^2\) and \(F^{<}_{Z_{n,i}\mid\mathbf r}(t)=F_{Z_{n,i}\mid\mathbf r}(t)\) for \(t>\sigma^2\).
For fixed \(\mathsf S_n=x\), \(\mathbf Y_i=\mathbf y_i=(y_1,\ldots,y_i)\), and \(\mathbf R=\mathbf r\),
\begin{equation}\label{eq:joint_event_interval_equivalence}
\mathcal D_n^{(i)}\cap\mathcal C_n^{(i)}
=
\left\{
\ell_{n,i}(x,\mathbf y_i)
\le
Z_{n,i}
<
u_{n,i}(x,\mathbf y_i,\mathbf r)
\right\},
\end{equation}
where
\begin{equation}\label{eq:joint_interval_lower}
\ell_{n,i}(x,\mathbf y_i)
\triangleq
\frac{x}{\tau_n}-y_i
\end{equation}
and
\begin{align}\label{eq:joint_interval_upper}
u_{n,i}(x,\mathbf y_i,\mathbf r)
&\triangleq
\min\Bigg\{
\frac{x}{\tau_n},
\min_{\substack{1\le k\le i\\r_k<\infty}}
\left[
\frac{y_k}{\tau_k}
-x
-\sum_{j=k+1}^{i}y_j
\right]
\Bigg\}.
\end{align}
The minimum over an empty index set is interpreted as \(+\infty\). Consequently, we have
\begin{align}\label{eq:joint_residual_interval_probability}
\mathcal G_{n,i}(x,\mathbf y_i\mid\mathbf r)
&=
\mathds{1}_{\{u_{n,i}>\ell_{n,i}\}}
\left[
F_{Z_{n,i}\mid\mathbf r}^{<}(u_{n,i})
-
F_{Z_{n,i}\mid\mathbf r}^{<}(\ell_{n,i})
\right].
\end{align}
\end{lemma}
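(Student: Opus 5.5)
Conditioned on \(E_n\), \(\mathbf R=\mathbf r\), \(\mathsf S_n=x\), and \(\mathbf Y_i=\mathbf y_i\), the plan is to rewrite every inequality defining \(\mathcal D_n^{(i)}\cap\mathcal C_n^{(i)}\) as a linear constraint on the single random quantity \(Z_{n,i}=\sigma^2+\mathcal I_n^{(i)}\). The key bookkeeping identity comes from \eqref{eq:residual_interference_serv_non}. For \(j\le i\), the residual interference after stage \(j\) satisfies
\begin{equation}
\sigma^2+\mathcal I_n^{(j)}=Z_{n,i}+\sum_{l=j+1}^{i}Y_l, \nonumber
\end{equation}
with \(Y_l=0\) when \(r_l=\infty\). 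This holds because \(\mathcal I_n^{(j)}\) differs from \(\mathcal I_n^{(i)}\) only by the serving terms of shells \(j+1,\ldots,i\), and all of these are below \(n\), so none is excluded by \(m\neq n\). In the same way, \eqref{eq:F_interference_new} gives
\begin{equation}
\mathcal F_n^{(k)}+\sigma^2=x+Z_{n,i}+\sum_{j=k+1}^{i}y_j. \nonumber
\end{equation}

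We then translate the events one at a time. First, \(D_n^{(i)}\) is \(x/Z_{n,i}>\tau_n\), that is, \(Z_{n,i}<x/\tau_n\). Second, \((D_n^{(i-1)})^c\) is \(x/(Z_{n,i}+y_i)\le\tau_n\), that is, \(Z_{n,i}\ge x/\tau_n-y_i=\ell_{n,i}\). Third, for each \(k\le i\) with \(r_k<\infty\), the event \(C_n^{(k)}\) is \(y_k/(x+Z_{n,i}+\sum_{j>k}y_j)>\tau_k\), which rearranges to \(Z_{n,i}<y_k/\tau_k-x-\sum_{j=k+1}^{i}y_j\). For \(k\) with \(r_k=\infty\), \(\Phi_k^{\rm serv}=\varnothing\), so \(C_n^{(k)}\) holds automatically and that index drops out of the minimum.

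Intersecting these constraints yields \eqref{eq:joint_event_interval_equivalence}, with the strict upper bound \(u_{n,i}\) given by the minimum in \eqref{eq:joint_interval_upper}. The case \(i=0\) needs separate treatment. There the complement factor is absent, because \(\mathcal D_n^{(0)}=D_n^{(0)}\) and \(\mathcal C_n^{(0)}=\Omega\), and the lower bound becomes vacuous. Defining \(\ell_{n,0}\) by the convention \(y_0\) large, or simply noting that \(Z_{n,0}\ge\sigma^2>-\infty\), recovers the same form. I expect to state this convention explicitly.

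For the probability, we use the independence stated just before the lemma: \(Z_{n,i}\) is independent of \((\mathsf S_n,\mathbf Y_i)\) given \(\mathbf R=\mathbf r\). Hence the conditional probability of the interval event is \(\mathbb P(\ell\le Z<u)=F^{<}(u)-F^{<}(\ell)\) when \(u>\ell\), and zero otherwise, which is \eqref{eq:joint_residual_interval_probability}. The claims about \(F^{<}\) follow from \(Z_{n,i}\ge\sigma^2\). They also use Lemma~\ref{lem:conditional_residual_cdf} together with the absence of atoms of \(\mathcal I_n^{(i)}\) on \((0,\infty)\), which holds because each nonzero summand has a continuous Gamma-type fading law.

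The main obstacle is the careful handling of boundary conventions: strict versus non-strict inequalities, the atom of \(\mathcal I_n^{(i)}\) at \(0\) (which places an atom of \(Z_{n,i}\) at \(\sigma^2\)), empty shells in the index sets, and the \(i=0\) case. The remaining steps are direct algebraic rearrangements.
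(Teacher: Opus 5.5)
Your proposal is correct and follows essentially the same route as the paper. Both rewrite $D_n^{(i)}$, $(D_n^{(i-1)})^c$ (using $Z_{n,i-1}=Z_{n,i}+y_i$), and each occupied-shell $C_n^{(k)}$ as linear constraints on the single quantity $Z_{n,i}$, drop empty shells from the minimum, and then use the conditional independence of $Z_{n,i}$ from $(\mathsf S_n,\mathbf Y_i)$ given $\mathbf R=\mathbf r$ to obtain the difference of strict-inequality CDFs. Your separate treatment of $i=0$ is unnecessary, since the lemma is only invoked for $i\ge 1$ and the $i=0$ term is handled by Lemma~\ref{lem:direct_decoding_joint}. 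The paper's extra remark that an empty shell $i$ gives $y_i=0$, and hence an empty interval, is already covered by your rearrangement.
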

\begin{proof} 
See Appendix \ref{proof:joint_interval_representation}.
\end{proof}

Lemma~\ref{lem:joint_interval_representation} reduces the joint desired-decoding and SIC conditions to an interval constraint on the common residual term \(Z_{n,i}\). The interval probability is then obtained from the residual CDF derived in Lemma~\ref{lem:conditional_residual_cdf}.

\begin{theorem}[Shell-conditioned rate coverage]\label{thm:joint_analytical_coverage}
Conditioned on association with shell \(n\), the rate coverage probability is
\begin{equation}\label{eq:exact_joint_shell_coverage}
\mathcal{P}_{\mathrm{cov},n}(\gamma)
=
q_{n,0}
+
\sum_{i=1}^{n-1}q_{n,i},
\end{equation}
where \(q_{n,0}\) is given in Lemma~\ref{lem:direct_decoding_joint}. For \(i=1,\ldots,n-1\),
\begin{equation}
q_{n,i}
\triangleq
\mathbb P\!\left(
\mathcal D_n^{(i)}
\cap
\mathcal C_n^{(i)}
\mid E_n
\right), \nonumber
\end{equation}
and
\begin{align}\label{eq:exact_joint_stage_expectation}
&q_{n,i}
=
\mathbb E\!\left[
\mathcal G_{n,i}
\left(
\mathsf S_n,\mathbf Y_i
\mid\mathbf R
\right)
\mid E_n
\right]
\nonumber\\
&=
\int_{h_n}^{r_{\max,n}}
f_{R_n\mid E_n}(r_n)
\mathbb E_{\mathbf R_{-n}\mid E_n,R_n=r_n}
\Bigg[
\int_0^\infty
f_{\mathsf S_n\mid r_n}(x)
\nonumber\\
&\phantom{----------}\times
\mathbb E_{\mathbf Y_i\mid\mathbf R}
\left[
\mathcal G_{n,i}
\left(
x,\mathbf Y_i
\mid\mathbf R
\right)
\right]
dx
\Bigg]
dr_n.
\end{align}

Here, \(\mathbf R_{-n}\) denotes the serving-distance vector excluding \(R_n\), and \(\mathbf Y_i=(Y_1,\ldots,Y_i)\). For each \(m\neq n\), the conditional law of \(R_m\) is specified by \eqref{eq:void_probability_joint} and \eqref{eq:conditional_Rm_finite_density_joint}. For \(r_k<\infty\), \(Y_k\) has density \eqref{eq:serving_power_density_joint}; for \(r_k=\infty\), \(Y_k=0\) almost surely.
\end{theorem}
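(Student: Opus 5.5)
The plan is to decompose the success event, evaluate each piece by conditioning, and then invoke Lemma~\ref{lem:joint_interval_representation}.

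\textbf{Step 1: disjoint decomposition.} By \eqref{eq:D_nested_property_new}, the incremental events $\{\mathcal D_n^{(i)}\}_{i=0}^{n-1}$ are mutually disjoint. Hence the sets $\mathcal D_n^{(i)}\cap\mathcal C_n^{(i)}$ appearing in \eqref{eq:shell_n_success_event_joint} are also disjoint. Countable additivity of $\mathbb P(\cdot\mid E_n)$ then gives $\mathcal P_{{\rm cov},n}(\gamma)=\sum_{i=0}^{n-1}\mathbb P(\mathcal D_n^{(i)}\cap\mathcal C_n^{(i)}\mid E_n)$. For $i=0$ we have $\mathcal C_n^{(0)}=\Omega$ and $\mathcal D_n^{(0)}=D_n^{(0)}$, so the first term is exactly $q_{n,0}$ from Lemma~\ref{lem:direct_decoding_joint}. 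This proves \eqref{eq:exact_joint_shell_coverage}.

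\textbf{Step 2: stage-wise terms via the tower property.} For $i\ge1$, condition on $E_n$, on $\mathbf R=\mathbf r$, on $\mathsf S_n=x$, and on $\mathbf Y_i=\mathbf y_i$. Lemma~\ref{lem:joint_interval_representation} shows that the event equals $\{\ell_{n,i}\le Z_{n,i}<u_{n,i}\}$. Because $Z_{n,i}$ is conditionally independent of $(\mathsf S_n,\mathbf Y_i)$ given $E_n,\mathbf R$ (as stated before the lemma), the conditional probability of this event is $\mathcal G_{n,i}(x,\mathbf y_i\mid\mathbf r)$ in \eqref{eq:joint_residual_interval_probability}.

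Two checks are needed here.
\begin{itemize}
\item The indicator $\mathds 1_{\{u>\ell\}}$ correctly handles an empty interval.
\item Using $F^{<}$ at both endpoints yields $\mathbb P(\ell\le Z<u)=F^{<}(u)-F^{<}(\ell)$. This is exact for a closed-left, open-right interval, including the atom at $\sigma^2$ when the residual interference is zero.
\end{itemize}
Taking expectations then gives $q_{n,i}=\mathbb E[\mathcal G_{n,i}(\mathsf S_n,\mathbf Y_i\mid\mathbf R)\mid E_n]$, which is the first line of \eqref{eq:exact_joint_stage_expectation}.

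\textbf{Step 3: iterated integral.} Expand the expectation in nested form: over $R_n$, then $\mathbf R_{-n}$, then $\mathsf S_n$, then $\mathbf Y_i$. The outer density is $f_{R_n\mid E_n}$ from Lemma~\ref{lem:serving_distance_conditioned_on_association}. The inner law of $\mathbf R_{-n}$ given $(E_n,R_n=r_n)$ factorizes across shells, with each factor given by \eqref{eq:conditional_Rm_finite_density_joint} and \eqref{eq:void_probability_joint}. This uses the independence of the shell SPPPs and the fact that, by the independent-increments property, conditioning on $E_n$ only truncates each $\Phi_m$ below $\rho_{m,n}(r_n)$.

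Given $\mathbf R$, the variable $\mathsf S_n$ depends only on $H_{0,n}$ and $r_n$, so its density is \eqref{eq:desired_power_density_joint}. Each $Y_k$ depends only on $(X_k,H_{0,k},r_k)$, so its density is \eqref{eq:serving_power_density_joint}, or it is a point mass at $0$ when $r_k=\infty$. All of these are mutually independent. Since $0\le\mathcal G_{n,i}\le1$, Fubini--Tonelli justifies the order of integration, and the second line of \eqref{eq:exact_joint_stage_expectation} follows.

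\textbf{Main obstacle.} The delicate step is justifying that conditioning on $E_n$ does not disturb the independence structure, and that it leaves $\mathsf S_n$ with its unconditional Gamma law. $E_n$ is determined solely by the point processes, specifically by the void events $\{\Phi_m\cap\mathcal A_m(T_{mn}R_n)=\varnothing\}$ together with $\mathcal E_n$. The fading and antenna variables are independent of the point processes by assumption. Consequently, conditioning on $(E_n,\mathbf R)$ leaves $H_{0,n}$, the pairs $(X_k,H_{0,k})$, and the fading of the non-serving satellites with their original laws. It also preserves their independence from the Poisson configuration beyond the nearest points, which is what $Z_{n,i}$ depends on. I would state this point carefully, then cite Lemmas~\ref{lem:joint_residual_transform} and~\ref{lem:conditional_residual_cdf} to conclude that $F_{Z_{n,i}\mid\mathbf r}$ is well defined and computable.
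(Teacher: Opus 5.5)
Your proposal is correct and takes essentially the same route as the paper. You use disjointness of the incremental events via \eqref{eq:D_nested_property_new}, then the law of total expectation together with Lemma~\ref{lem:joint_interval_representation} to identify the inner probability as $\mathcal G_{n,i}$, and then expand over $R_n$, $\mathbf R_{-n}$, $\mathsf S_n$, $\mathbf Y_i$ using conditional independence. The differences are only that you present the summation step first and spell out the independence and Fubini--Tonelli justifications, which the paper takes for granted.
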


\begin{proof}
For \(i=1,\ldots,n-1\), applying the law of total expectation to the stage-\(i\) event gives
\begin{align}
q_{n,i}
&=
\mathbb E\Big[
\mathbb P\left(
\mathcal D_n^{(i)}\cap\mathcal C_n^{(i)}
\mid
E_n,\mathbf R,\mathsf S_n,\mathbf Y_i
\right)
\mid E_n
\Big]. \nonumber
\end{align}
By Lemma~\ref{lem:joint_interval_representation}, the inner conditional probability equals \(\mathcal G_{n,i}(\mathsf S_n,\mathbf Y_i\mid\mathbf R)\), which proves the first equality in \eqref{eq:exact_joint_stage_expectation}. Expanding this expectation over \(R_n\), \(\mathbf R_{-n}\), \(\mathsf S_n\), and \(\mathbf Y_i\), using the conditional independence stated above and the distributions in \eqref{eq:Rn_cond_En_pdf}, \eqref{eq:void_probability_joint}, \eqref{eq:conditional_Rm_finite_density_joint}, \eqref{eq:desired_power_density_joint}, and \eqref{eq:serving_power_density_joint}, gives the second equality in \eqref{eq:exact_joint_stage_expectation}.

It remains to sum the mutually disjoint stage-wise events. From \eqref{eq:D_nested_property_new}, the incremental desired-decoding events \(\{\mathcal D_n^{(i)}\}_{i=0}^{n-1}\) are mutually disjoint, and hence so are \(\{\mathcal D_n^{(i)}\cap\mathcal C_n^{(i)}\}_{i=0}^{n-1}\). Using \eqref{eq:shell_n_success_event_joint},
\begin{align}
\mathcal{P}_{\mathrm{cov},n}(\gamma)
&=
\mathbb P(\mathcal S_n(\gamma)\mid E_n)
\nonumber\\
&=
\sum_{i=0}^{n-1}
\mathbb P\!\left(
\mathcal D_n^{(i)}\cap\mathcal C_n^{(i)}
\mid E_n
\right). \nonumber
\end{align}
The \(i=0\) term is \(q_{n,0}\), and the remaining terms are \(q_{n,i}\), which proves \eqref{eq:exact_joint_shell_coverage}. For \(n=1\), the summation is empty and \(\mathcal{P}_{\mathrm{cov},1}(\gamma)=q_{1,0}\).
\end{proof}

\begin{corollary}\label{coro:rate coverage}
The overall rate coverage probability is
\begin{equation}\label{eq:exact_joint_overall_coverage}
p_{\rm cov}(\gamma)
=
\sum_{n=1}^{N}
 \mathcal{P}_{\mathrm{cov},n}(\gamma) p_n.
\end{equation}
\end{corollary}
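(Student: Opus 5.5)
The plan is to apply the law of total probability over the shell-association events \(\{E_n\}_{n=1}^N\), together with the outage event \(\mathcal E^c\). First, I would check that these events partition the sample space up to a null set.

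\emph{Disjointness.} The events \(E_n\) are pairwise disjoint up to ties. By the association rule \eqref{eq:association_rule_serving}, at most one shell can attain the strict maximum of \(P_nB_nR_n^{-2}\). Ties have probability zero, because the \(R_n\) are continuous random variables from independent SPPPs.

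\emph{Covering.} On \(\mathcal E\) some serving candidate exists, so \(n^*\) is well defined and exactly one \(E_n\) occurs. Hence \(\bigcup_n E_n=\mathcal E\) almost surely, consistent with the remark after Theorem~\ref{thm:shell_wise_association_probability} that \(\sum_n p_n=p_{\mathcal E}\).

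\emph{Outage.} On \(\mathcal E^c\) no serving satellite is visible, and the rate coverage event is defined to fail there. The paper states that the user is in outage when no visible serving satellite exists, so this event contributes zero.

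With the partition in hand, I would write the overall coverage event as \(\bigcup_n (E_n\cap\mathcal S_n(\gamma))\). The decoding success event \(\mathcal S_n(\gamma)\) is the relevant one on \(E_n\), since the associated shell fixes both the load \(L_n\), hence \(\tau_n\), and the SIC order. The pieces of this union are disjoint, so
\begin{align}
p_{\rm cov}(\gamma)&=\sum_{n=1}^N \mathbb P(\mathcal S_n(\gamma)\cap E_n)+\mathbb P(\text{cov}\cap\mathcal E^c)\nonumber\\
&=\sum_{n=1}^N \mathbb P(\mathcal S_n(\gamma)\mid E_n)\,p_n. \nonumber
\end{align}
Substituting \(\mathcal P_{{\rm cov},n}(\gamma)=\mathbb P(\mathcal S_n(\gamma)\mid E_n)\) from Theorem~\ref{thm:joint_analytical_coverage} and \(p_n\) from Theorem~\ref{thm:shell_wise_association_probability} gives \eqref{eq:exact_joint_overall_coverage}.

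The only delicate point is the one about the loads. The \(L_n\) are deterministic functions of \(\{p_m\}\), not random quantities tied to the realization. Conditioning on \(E_n\) therefore leaves \(\tau_n\) and \(\tau_k\) unchanged, and the per-shell conditional probabilities computed in Theorem~\ref{thm:joint_analytical_coverage} apply unmodified. Beyond this, I expect the argument to be routine bookkeeping with no real obstacle.
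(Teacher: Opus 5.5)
Your proposal is correct and follows the same route as the paper: the law of total probability over the disjoint association events \(E_n\), weighted by \(p_n\), with the no-visible-satellite event \(\mathcal E^c\) contributing zero coverage. Your explicit checks that ties are null, that \(\bigcup_n E_n=\mathcal E\) almost surely, and that the loads \(L_n\) are deterministic spell out what the paper's two-line proof leaves implicit.
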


\begin{proof}
The association events \(\{E_n\}_{n=1}^{N}\) are mutually disjoint, and the typical user is in outage when no visible serving satellite is available. Averaging the shell-conditioned rate coverage in Theorem~\ref{thm:joint_analytical_coverage} over the shell-wise association probabilities in Theorem~\ref{thm:shell_wise_association_probability} gives the result.
\end{proof}

Corollary~\ref{coro:rate coverage} completes the analytical characterization of rate coverage. The expression combines the shell-wise association probabilities, the load-dependent SINR thresholds, and the SIC-aided decoding probabilities, and is evaluated numerically in the next section.

\begin{remark}[Biasing and SIC are Mutually Reinforcing]
Association biasing affects not only the serving-shell selection but also the decodability of lower-shell serving signals during SIC.
Decoding shell \(k\) requires its SINR to exceed \(\tau_k=2^{L_k\gamma/W}-1\), where \(L_k=1+\delta K_up_k\).
Hence, biasing traffic away from a lower shell reduces its association probability \(p_k\), lowers its decoding threshold, and facilitates cancellation of its serving signal.
Conversely, SIC mitigates the inter-shell interference experienced by users offloaded to upper shells, partially compensating for their larger propagation loss.
Thus, association biasing and SIC reinforce each other through the coupling among shell association, load, and inter-shell interference. 
This coupling among association, load, and SIC is consistent with the SIC-aware association analysis in \cite{wildemeersch:tcom:2014}.
\end{remark}

\begin{table}[t!]
\caption{Simulation parameters.}
\label{tab:simulation_parameters}
\centering
\begin{tabular}{ll}
\toprule
Parameter & Value \\
\midrule
Earth radius & $R_\oplus = 6371\text{ km}$ \\
Carrier frequency & $f_{\mathrm{c}} = 12\text{ GHz}$ \\
Transmit power & $P = 35\text{ dBm}$ \\
System bandwidth & $W = 250\text{ MHz}$ \\
Transmit antenna gains & $G_{\mathrm{t}} = 35\text{ dBi}$, $g_{\mathrm{t}} = 15\text{ dBi}$ \\
Receive antenna gains & $G_{\mathrm{r}} = 10\text{ dBi}$, $g_{\mathrm{r}} = 0\text{ dBi}$ \\
Main-lobe reception probability & $p_{\mathrm{r}} = 0.2$ \\
Concurrent-demand fraction & $\delta = 0.1$ \\
Minimum elevation angle & $\theta_{\min} = 15^\circ$ \\
Noise temperature & $T_{\rm{temp}} = 120\text{ K}$ \\
\bottomrule
\end{tabular}
\end{table}

\begin{table*}[t!]
\caption{Comparison of the proposed and baseline configurations.}
\label{tab:baseline_comparison}
\centering
\footnotesize
\renewcommand{\arraystretch}{1.0}
\begin{tabularx}{\textwidth}{@{} l >{\raggedright\arraybackslash}X l c l c @{}}
\toprule
Configuration
& Deployment and serving links
& Traffic allocation
& Bandwidth per link
& Treatment of other serving signals
& SIC \\
\midrule
\makecell[tl]{Proposed}
& \makecell[tl]{Multi-shell; nearest serving satellite per\\shell}
& \makecell[tl]{Biased received power}
& \makecell[tc]{\(W\)}
& \makecell[tl]{Co-channel under full reuse}
& \makecell[tc]{Yes} \\
\makecell[tl]{MS-TIN\\\strut}
& \makecell[tl]{As proposed}
& \makecell[tl]{As proposed}
& \makecell[tc]{\(W\)}
& \makecell[tl]{Treated as noise}
& \makecell[tc]{No} \\
\makecell[tl]{MS-OMA (\(N\))}
& \makecell[tl]{As proposed}
& \makecell[tl]{Uniform over \(N\) links;\\over \(M\) if \(M<N\)}
& \makecell[tc]{\(W/N\);\\\(W/M\) if \(M<N\)}
& \makecell[tl]{Orthogonalized across links}
& \makecell[tc]{No} \\
\makecell[tl]{SS-OMA (\(N\)), \(h\)}
& \makecell[tl]{Single shell at altitude \(h\); up to \(N\)\\nearest visible satellites}
& \makecell[tl]{Uniform over \(N\) links;\\over \(M\) if \(M<N\)}
& \makecell[tc]{\(W/N\);\\\(W/M\) if \(M<N\)}
& \makecell[tl]{Orthogonalized across\\active links}
& \makecell[tc]{No} \\
\bottomrule
\end{tabularx}
\end{table*}

\section{Numerical Results}\label{sec:numerical results}

\begin{figure}[t!]
    \centering
    \includegraphics[width=1\linewidth]{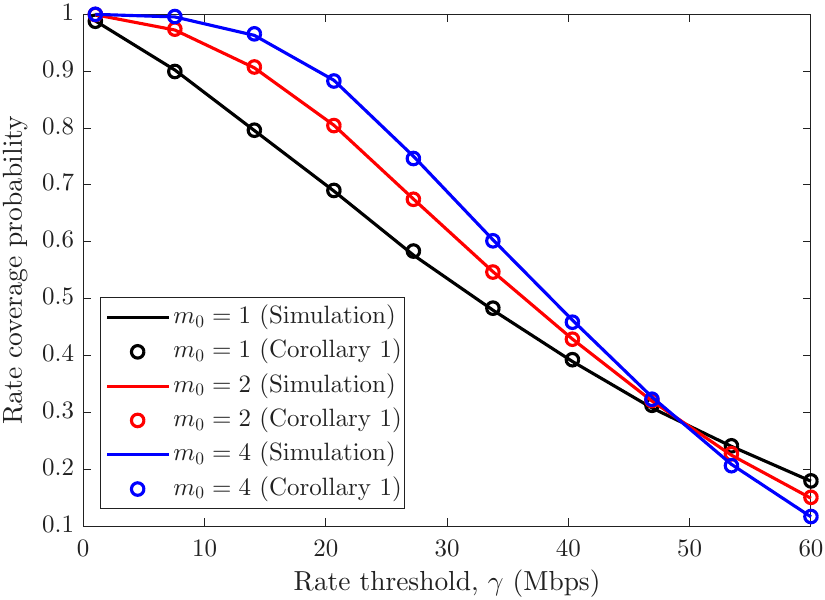}
    \caption{Analytical and simulated rate coverage probability for different values of the Nakagami parameter \(m_0\), with \(\boldsymbol{\Theta}=[1000,1000,1000]\), \(\mathbf B_{\mathrm{dB}}=[0,1.7,2.4]~\mathrm{dB}\), and \(K_u=250\).}
    \label{fig:validation}
\end{figure}

Unless otherwise specified, a three-shell configuration with \(N=3\) and \(\mathbf h=[360,480,550]~\mathrm{km}\) is considered.
In the numerical section, the bias vector in decibels is denoted by \(\mathbf B_{\mathrm{dB}}=[10\log_{10}B_1,\dots,10\log_{10}B_N]\), and \(\boldsymbol{\Theta}=[4\pi\lambda_1a_1^2,\ldots,4\pi\lambda_Na_N^2]\) denotes the average satellite count vector.
To isolate the effects of shell geometry and association biasing, we use a common transmit power \(P_n=P=35~\mathrm{dBm}\) across shells as a reference configuration, while practical systems may employ shell-dependent transmit powers.
Experiment-specific satellite-count and bias settings are given with the corresponding results.
The remaining parameters are listed in Table~\ref{tab:simulation_parameters}.

The three baselines in Table~\ref{tab:baseline_comparison} are multi-shell treating interference as noise (MS-TIN), multi-shell orthogonal multiple access (MS-OMA), and single-shell orthogonal multiple access (SS-OMA); MS-TIN is the proposed architecture without SIC.
The OMA baselines use \(N\) orthogonal serving links when available; if only \(M<N\) links are available, all \(M\) are used, while \(M=0\) results in outage.
MS-OMA selects one nearest visible satellite per occupied shell, whereas SS-OMA selects up to the \(N\) nearest visible satellites from one shell.
The active links equally share bandwidth and traffic, yielding \(W/M\), \(L_{\rm OMA}(M)=1+\delta K_u/M\), and thermal noise over \(W/M\).
Non-serving satellites are partitioned among the \(M\) active groups with group sizes differing by at most one, and rate coverage is averaged over the \(M\) active links.
For SS-OMA with \(h=h_n\), the average satellite count matches that of shell \(n\).

Fig.~\ref{fig:validation} compares the analytical rate coverage probability in Corollary~\ref{coro:rate coverage} with Monte Carlo simulations for different Nakagami parameters \(m_0\).
The close agreement over the considered rate threshold range validates the analytical characterization of shell association, residual interference, and SIC-aided rate coverage.

\begin{figure}[t!]
    \centering
    \includegraphics[width=1\linewidth]{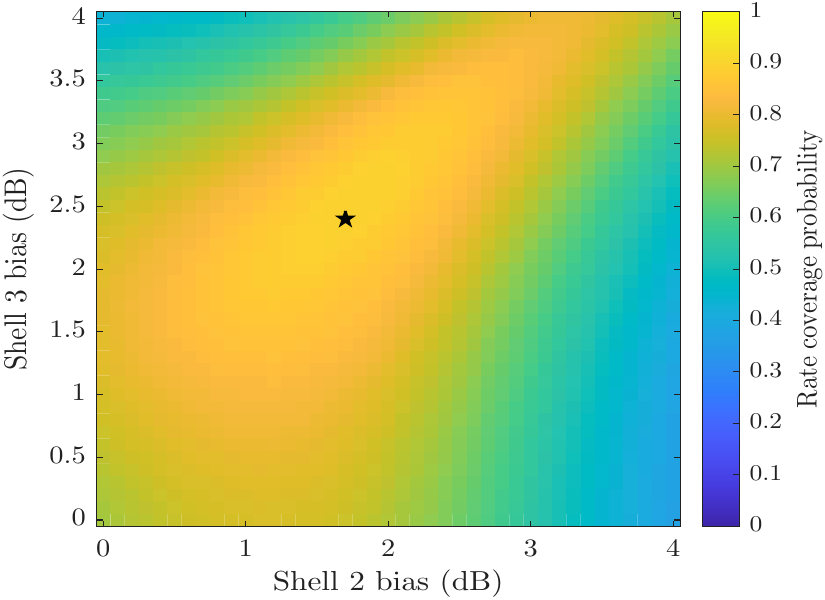}
    \caption{Rate coverage probability versus the shell-2 and shell-3 association biases for \(\boldsymbol{\Theta}=[1000,1000,1000]\). The star marks the grid-optimal bias vector \(\mathbf B_{\mathrm{dB}}^\star=[0,1.7,2.4]~\mathrm{dB}\).}
    \label{fig:bias}
\end{figure}
\begin{figure}[t!]
    \centering
    \includegraphics[width=1\linewidth]{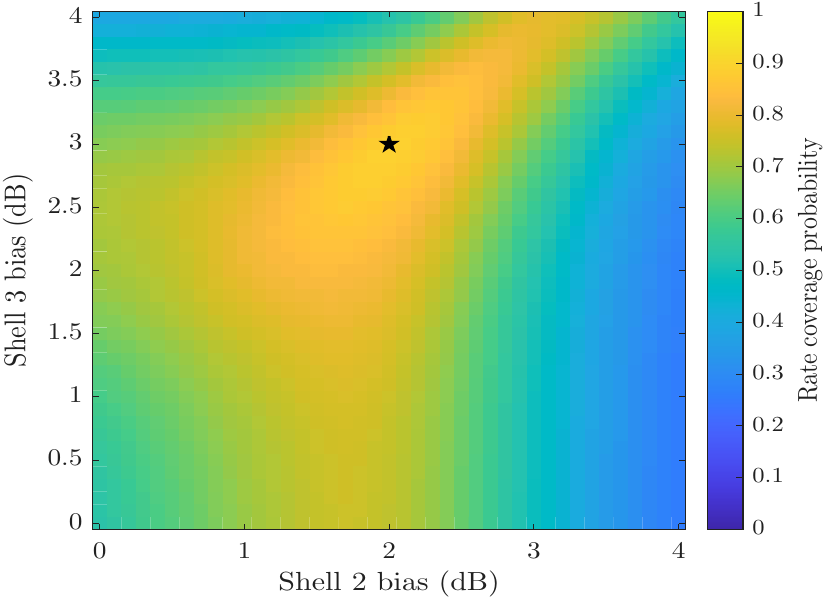}
    \caption{Rate coverage probability versus the shell-2 and shell-3 association biases for \(\boldsymbol{\Theta}=[2000,2000,2000]\). The star marks the grid-optimal bias vector \(\mathbf B_{\mathrm{dB}}^\star=[0,2,3]~\mathrm{dB}\). Compared with Fig.~\ref{fig:bias}, the larger average satellite count shifts the optimum toward stronger upper-shell biasing.}
    \label{fig:bias2}
\end{figure}

\subsection{The Role of Bias}

\begin{figure}
    \centering
    \includegraphics[width=1\linewidth]{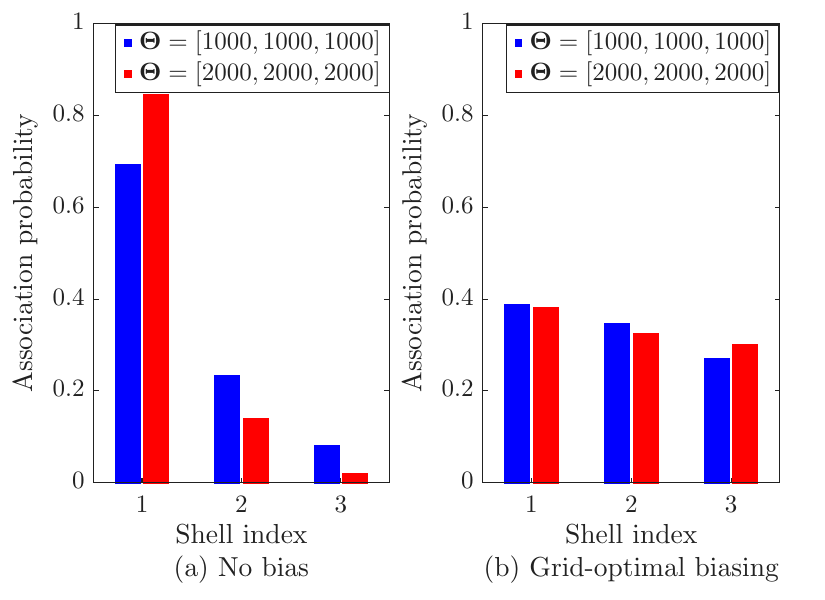}
    \caption{Shell association probabilities: (a) without biasing and (b) with grid-optimal biasing. The corresponding \(\mathbf B_{\mathrm{dB}}^\star\) vectors are \([0,1.7,2.4]\) and \([0,2,3]~\mathrm{dB}\) for blue and red, respectively.}
    \label{fig:load}
\end{figure}

\begin{figure}
    \centering
    \includegraphics[width=1\linewidth]{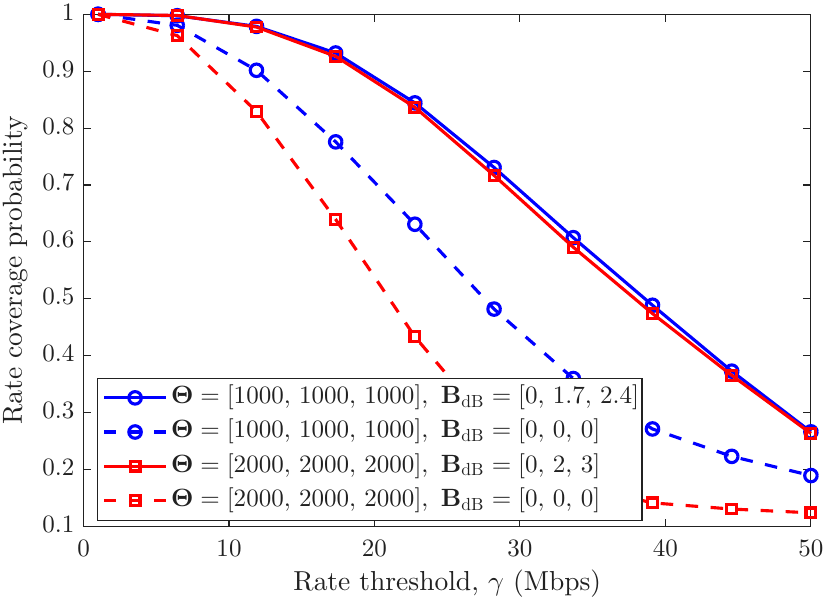}
    \caption{Rate coverage probability versus rate threshold for \(\boldsymbol{\Theta}=[1000,1000,1000]\) and \(\boldsymbol{\Theta}=[2000,2000,2000]\).}
    \label{fig:number compare}
\end{figure}

Figs.~\ref{fig:bias} and~\ref{fig:bias2} show the rate coverage probability obtained by sweeping the biases of shells 2 and 3, with the first entry of \(\mathbf B_{\mathrm{dB}}\) fixed at \(0~\mathrm{dB}\), \(m_0=4\), \(K_u=250\), and \(\gamma=20~\mathrm{Mbps}\).
For each satellite count vector, the bias vector that maximizes the rate coverage probability over the selected grid is denoted by \(\mathbf B_{\mathrm{dB}}^\star\) and referred to as the optimal bias for the considered operating point.
The grid-optimal bias therefore shifts toward stronger upper-shell biasing as the constellation becomes denser.
These biases are reused without re-optimization: Fig.~\ref{fig:number compare} uses the corresponding vector, while Figs.~\ref{fig:SIC and single},~\ref{fig:increase user 101010} and ~\ref{fig:pr compare 101010} use the \(\boldsymbol{\Theta}=[1000,1000,1000]\) vector.

Fig.~\ref{fig:load} explains this density-dependent bias through the shell association probabilities.
Without biasing, the denser constellation associates more users with shell 1 because the shorter serving distances increase its received-power advantage over the upper shells.
Applying the optimal bias shifts traffic toward shells 2 and 3 in both configurations.
Nevertheless, shell 1 retains a slightly larger association probability than the upper shells, allowing more users to exploit the shorter propagation distance of lower-altitude service.
The association probabilities predicted by Theorem~\ref{thm:shell_wise_association_probability} closely agree with the Monte Carlo results. Across the three shell indices, the two satellite-count configurations, and both biasing conditions, the maximum squared error is \(3.95\times10^{-5}\).

The association imbalance in Fig.~\ref{fig:load} directly translates into unequal shell loads, as reflected in the rate coverage probability in Fig.~\ref{fig:number compare} for \(m_0=4\) and \(K_u=250\).
Without biasing, the denser constellation places more users on shell 1, increasing its load and the corresponding SINR threshold.
These effects result in lower rate coverage despite the shorter serving distances.
After biasing, load redistribution largely removes the difference caused by shell-1 concentration.
Since each shell provides at most one serving link, densification does not increase the maximum number of serving links within the footprint.
A network-wide benefit can instead arise from supporting more footprints, which is not captured by the typical-footprint rate coverage considered here.

\textbf{Takeaway 1 (Biasing and Lower-Shell Preference):} Multi-shell geometry tends to favor lower-shell association, while shell-dependent biasing redistributes traffic toward the upper shells. This load balancing improves rate coverage while preserving some lower-shell preference to exploit shorter propagation distances.

\textbf{Takeaway 2 (Per-Footprint Limit of Densification):} Increasing the satellite count does not necessarily improve per-footprint rate coverage because it does not add shell-wise serving opportunities. A network-wide benefit can instead arise from supporting more footprints.

\begin{figure}
    \centering
    \includegraphics[width=1\linewidth]{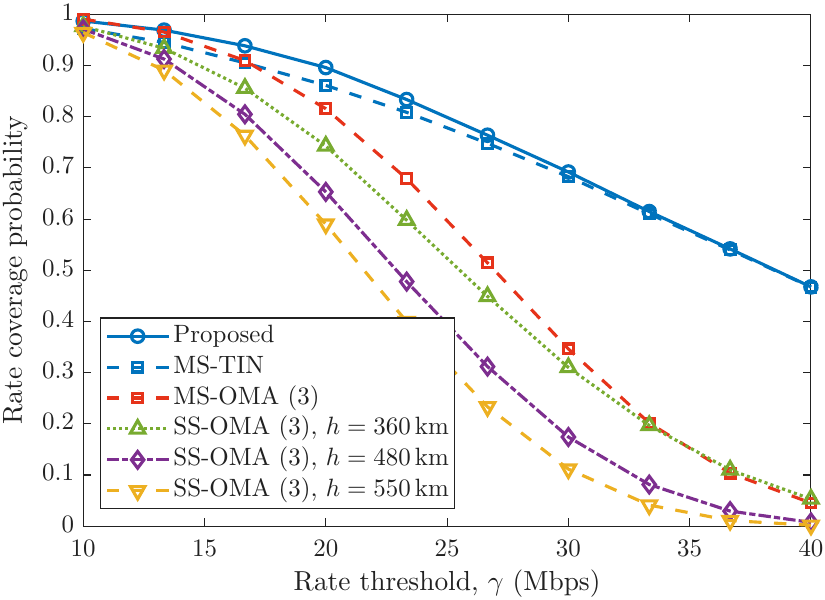}
    \caption{Rate coverage probability versus rate threshold for the proposed architecture and baseline configurations, with \(\boldsymbol{\Theta}=[1000,1000,1000]\) and \(\mathbf B_{\mathrm{dB}}=[0,1.7,2.4]~\mathrm{dB}\).}
    \label{fig:SIC and single}
\end{figure}

\begin{figure}
    \centering
    \includegraphics[width=1\linewidth]{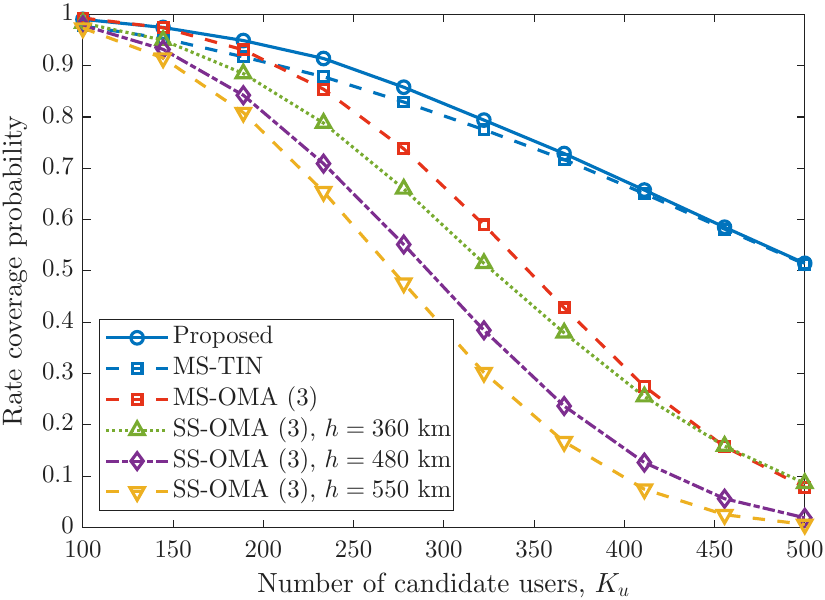}
    \caption{Rate coverage probability versus the number of candidate users \(K_u\) for the proposed architecture and baseline configurations, with \(\boldsymbol{\Theta}=[1000,1000,1000]\), \(\mathbf B_{\mathrm{dB}}=[0,1.7,2.4]~\mathrm{dB}\), and \(\gamma=20~\mathrm{Mbps}\).}
    \label{fig:increase user 101010}
\end{figure}

\begin{figure}
    \centering
    \includegraphics[width=1\linewidth]{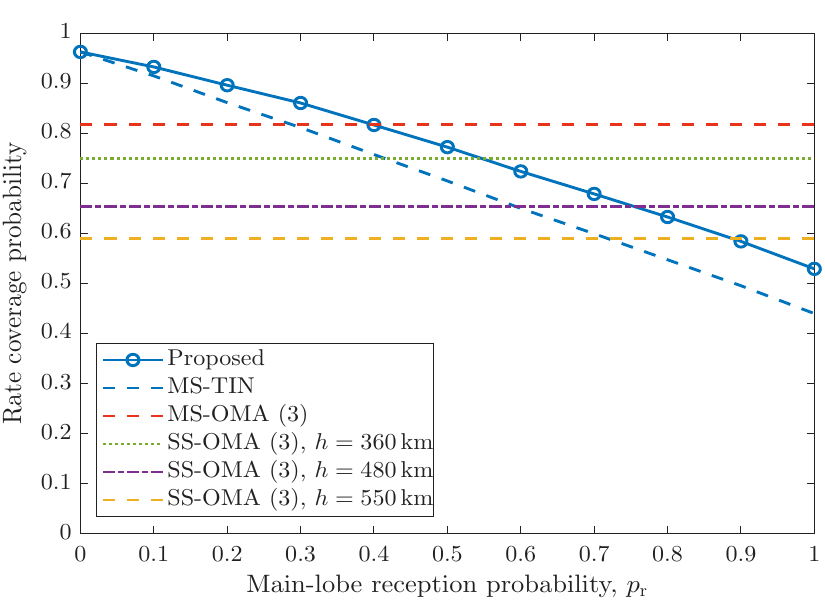}
    \caption{Rate coverage probability versus the main-lobe reception probability \(p_{\rm r}\) with \(\boldsymbol{\Theta}=[1000,1000,1000]\) and \(\mathbf B_{\mathrm{dB}}=[0,1.7,2.4]~\mathrm{dB}\).}
    \label{fig:pr compare 101010}
\end{figure}

\subsection{Comparison With Baselines}

Fig.~\ref{fig:SIC and single} compares the proposed architecture with the baselines for \(m_0=4\) and \(K_u=250\).
The proposed architecture achieves the highest rate coverage probability, while MS-TIN differs only by the absence of SIC, so their gap quantifies the gain from SIC.
As \(\gamma\) increases, the higher decoding threshold reduces the decodability of lower-shell serving signals, narrowing the gap to MS-TIN.
Despite its threefold larger total average satellite count, MS-OMA provides only a limited advantage over SS-OMA at \(h=360~\mathrm{km}\), which diminishes at higher rate thresholds. 
Among the SS-OMA baselines, the lowest shell performs best due to its shorter serving distances.

Fig.~\ref{fig:increase user 101010} shows the rate coverage probability as \(K_u\) increases.
Increasing \(K_u\) increases the resource-sharing load and hence the required SINR threshold, producing a trend similar to that observed with increasing \(\gamma\) in Fig.~\ref{fig:SIC and single}.
At low \(K_u\), all configurations achieve relatively high rate coverage probability because the resource-sharing load is small.
As \(K_u\) increases, the rate coverage probability of the OMA baselines decreases more rapidly because of bandwidth partitioning and increasing load.
The proposed architecture maintains higher rate coverage probability by distributing traffic across shells while retaining bandwidth \(W\) on each serving link.

\textbf{Takeaway 3 (Benefit in Traffic Hotspots):} Multi-shell load balancing with full frequency reuse improves rate coverage probability across traffic loads, and its advantage becomes more pronounced as traffic demand increases.

Fig.~\ref{fig:pr compare 101010} varies \(p_{\rm r}\) with \(m_0=4\), \(K_u=250\), and \(\gamma=20~\mathrm{Mbps}\).
As \(p_{\rm r}\) increases, more non-associated serving signals are received with main-lobe gain, increasing inter-shell interference and reducing the rate coverage probability of the full-reuse configurations, while the OMA baselines remain unchanged.
The proposed architecture outperforms all baselines up to approximately \(p_{\rm r}=0.4\), which gives the tolerable main-lobe reception probability for full frequency reuse at the considered operating point.
Beyond this crossover, stronger inter-shell interference offsets the benefit of full frequency reuse, while the widening gap to MS-TIN shows that SIC becomes more valuable as \(p_{\rm r}\) increases.
Figs.~\ref{fig:SIC and single} and~\ref{fig:increase user 101010} further show that the rate coverage advantage of full frequency reuse over OMA increases with the rate threshold and traffic load.
This suggests that the crossover can shift to larger \(p_{\rm r}\) under heavier traffic or higher rate requirements, allowing a less stringent receive-side isolation requirement.

\textbf{Takeaway 4 (Receive Isolation and SIC):} Increasing \(p_{\rm r}\) weakens receive-side isolation and reduces the rate coverage advantage of full frequency reuse over OMA, while increasing the benefit of SIC. The results further suggest that the tolerable \(p_{\rm r}\) can increase under heavier traffic or higher rate requirements, where the rate coverage gain of full frequency reuse is larger.

\begin{figure}[t]
    \centering
    \includegraphics[width=1\linewidth]{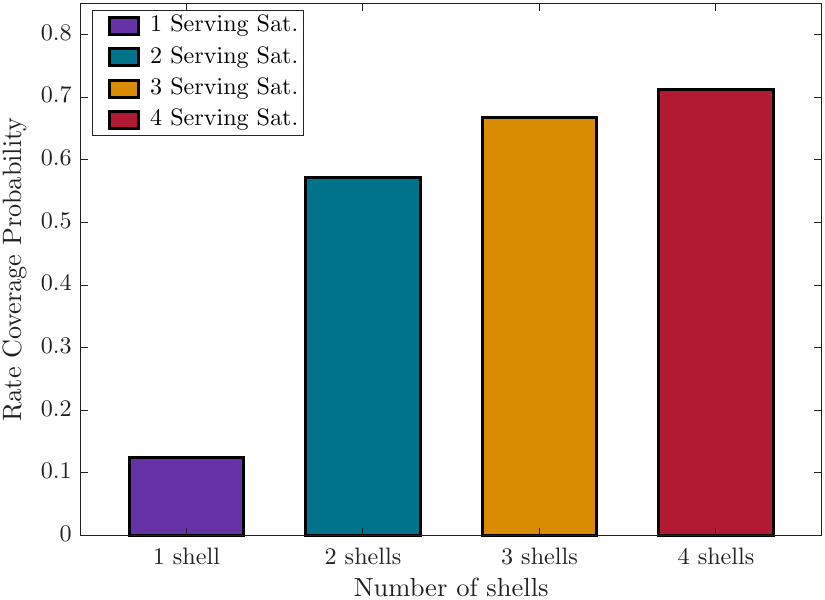}
    \caption{Rate coverage probability versus the number of active shells under a fixed total average satellite count of \(3000\), equally distributed across the active shells. Each occupied shell can provide up to one shell-wise serving opportunity to the footprint.}
    \label{fig:shell increase}
\end{figure}

\subsection{Effect of Additional Serving Shells}

The preceding baseline comparisons allow at most \(N=3\) serving links per footprint in every configuration.
Fig.~\ref{fig:shell increase} instead varies \(N\) and sets the maximum number of shell-wise serving links equal to \(N\), so each additional shell introduces one additional potential serving link.
The \(N=1\) case reduces to the single-shell, single-serving-link configuration.

The figure considers a traffic-hotspot regime with \(m_0=4\), \(K_u=400\), and \(\gamma=20~\mathrm{Mbps}\), while fixing the total average satellite count at \(3000\) and equally distributing it across the active shells.
For \(N=1,2,3,\) and \(4\), the altitude vectors are \([360]\), \([360,480]\), \([360,480,550]\), and \([360,480,550,650]\) km, respectively.
The biases are selected to equalize the shell association probabilities.
Although this is not generally rate-optimal, it provides a simple common load-balancing rule for comparing different numbers of serving shells without separate rate optimization.

The rate coverage probability increases sharply from one to two shells and then more gradually as \(N\) increases.
The additional serving opportunities distribute the traffic load across more links despite the fixed total average satellite count.
The diminishing gain reflects the progressively smaller load-balancing benefit as \(N\) increases, while the longer serving distances of the added upper shells further limit the gain.
This suggests that, in traffic-hotspot regimes, a multi-shell architecture providing multiple shell-wise serving opportunities over the same footprint can achieve higher rate coverage than concentrating the same satellite budget in a single shell.

\textbf{Takeaway 5 (Multi-Shell Serving Opportunities):} In traffic-hotspot regimes, multiple shell-wise serving opportunities over the same footprint can provide higher rate coverage than concentrating a fixed satellite budget in a single shell.

\section{Conclusion}\label{sec:conclusion}

This paper developed a rate coverage framework for multi-shell LEO networks with shell-dependent association biasing and receiver-side SIC.
Multi-altitude geometry tends to favor lower-shell association, while biasing offloads traffic across shells and improves rate coverage through load balancing.
SIC further mitigates the lower-shell interference experienced by users associated with upper shells.
Increasing the satellite count within a fixed set of shells does not necessarily improve per-footprint rate coverage because it does not add shell-wise serving opportunities, although its network-wide benefit can come from supporting more footprints.
In traffic-hotspot regimes, distributing a fixed satellite budget across multiple shells creates additional serving opportunities over the same footprint and can achieve higher rate coverage than concentrating the satellites in a single shell.
Future work includes distributional load modeling, imperfect SIC, and optimization of shell-wise satellite densities for rate coverage.

\appendices
\section{Proof of Lemma \ref{lem:joint_residual_transform}} \label{proof:joint_residual_transform}
Conditioned on \(E_n\) and \(\mathbf R=\mathbf r\), the associated serving distance is \(R_n=r_n<\infty\). For each non-associated shell \(m\neq n\), either \(r_m<\infty\), so that its serving satellite is fixed at distance \(r_m\), or \(r_m=\infty\), so that the shell has no visible satellite and contributes no interference.

For an occupied shell with \(r_m<\infty\), the independent-increments property of the SPPP implies that the remaining visible satellites form an independent SPPP outside \(r_m\). Hence, the aggregate non-serving interference from that shell contributes \(\Psi_m^{\rm non}(s,r_m)\). For the associated shell \(n\), the serving satellite is the desired transmitter, so only the non-serving satellites outside \(r_n\) contribute \(\Psi_n^{\rm non}(s,r_n)\).

After the serving signals from shells \(1,\ldots,i\) have been canceled, a non-associated serving signal remains in the residual interference only when \(m>i\), \(m\neq n\), and \(r_m<\infty\). Averaging over its fading and receive-side antenna-gain state gives \(M_m^{\rm serv}(s,r_m)\). All other cases contribute no serving-interference factor. Independence across shells, fading gains, and antenna-gain variables then yields the product form in \eqref{eq:joint_residual_LT}. Since \(Z_{n,i}=\sigma^2+\mathcal I_n^{(i)}\) with deterministic \(\sigma^2\), multiplying by \(e^{-s\sigma^2}\) gives \eqref{eq:joint_noise_inclusive_LT}.

\section{Proof of Lemma \ref{lem:conditional_residual_cdf}}\label{proof:conditional_residual_cdf}
Since \(\mathcal I_n^{(i)}\) is a sum of nonnegative interference powers, it is nonnegative. Hence, \(F_{\mathcal I_n^{(i)}\mid\mathbf r}(\xi)=0\) for \(\xi<0\). At \(\xi=0\), the CDF equals the probability that the residual interference is exactly zero. This atom can occur when no uncanceled serving interferer and no non-serving interferer are present under the conditioning. For \(s>0\),
\begin{equation}
e^{-s\mathcal I_n^{(i)}}
\rightarrow
\mathds{1}_{\{\mathcal I_n^{(i)}=0\}}
\quad\text{as }s\rightarrow+\infty .
\end{equation}
Since \(0\le e^{-s\mathcal I_n^{(i)}}\le1\), the dominated convergence theorem gives
\begin{equation}
\lim_{s\to+\infty}
\mathcal L_{\mathcal I_n^{(i)}\mid E_n,\mathbf R}(s\mid\mathbf r)
=
\mathbb P(\mathcal I_n^{(i)}=0\mid E_n,\mathbf R=\mathbf r).
\end{equation}
This proves the second case of \eqref{eq:gil_pelaez_joint_cdf}.

For \(\xi>0\), we adopt the Gil--Pelaez inversion formula \cite{di:commlett:2014}, which recovers a CDF from the characteristic function. For a real-valued random variable \(X\) with characteristic function \(\phi_X(\omega)\), the formula gives, at every continuity point \(x\),
\begin{align}\label{eq:general_gil_pelaez}
F_X(x)
=
\frac{1}{2}
-
\frac{1}{\pi}
\int_0^\infty
\frac{
\operatorname{Im}\!\left[
e^{-\mathrm j\omega x}
\phi_X(\omega)
\right]
}{\omega}
d\omega .
\end{align}
In the present setting, the conditional characteristic function of the residual interference is obtained by evaluating the conditional Laplace transform on the imaginary axis:
\begin{equation}
\phi_{\mathcal I_n^{(i)}\mid\mathbf r}(\omega)
=
\mathbb E\!\left[
e^{\mathrm j\omega\mathcal I_n^{(i)}}
\mid E_n,\mathbf R=\mathbf r
\right]
=
\mathcal L_{\mathcal I_n^{(i)}\mid E_n,\mathbf R}
(-\mathrm j\omega\mid\mathbf r).
\end{equation}
Substituting this characteristic function into \eqref{eq:general_gil_pelaez} yields the third case of \eqref{eq:gil_pelaez_joint_cdf}.

Finally, \(Z_{n,i}\) is the residual interference shifted by the deterministic noise power, i.e., \(Z_{n,i}=\sigma^2+\mathcal I_n^{(i)}\). Therefore,
\begin{equation}
F_{Z_{n,i}\mid\mathbf r}(z)
=
\mathbb P(\mathcal I_n^{(i)}\le z-\sigma^2\mid E_n,\mathbf R=\mathbf r).
\end{equation}
If \(z<\sigma^2\), then \(z-\sigma^2<0\), and the probability is zero because \(\mathcal I_n^{(i)}\ge0\). If \(z\ge\sigma^2\), the probability equals \(F_{\mathcal I_n^{(i)}\mid\mathbf r}(z-\sigma^2)\). This proves \eqref{eq:joint_noise_shifted_cdf}.

\section{Proof of Lemma \ref{lem:joint_interval_representation}}\label{proof:joint_interval_representation}
Condition on \(\mathsf S_n=x\), \(\mathbf Y_i=\mathbf y_i\), and \(\mathbf R=\mathbf r\). Since \(\mathcal D_n^{(i)}=D_n^{(i)}\cap(D_n^{(i-1)})^c\), the desired signal must be decodable after stage \(i\) but not immediately before it, and the shell-\(i\) serving power \(y_i\) is still present at stage \(i-1\). The two conditions \(x/Z_{n,i}>\tau_n\) and \(x/(y_i+Z_{n,i})\le\tau_n\) therefore give
\begin{equation}\label{eq:incremental_desired_interval}
\mathcal D_n^{(i)}=\left\{\frac{x}{\tau_n}-y_i\le Z_{n,i}<\frac{x}{\tau_n}\right\}.
\end{equation}
Next, consider an occupied lower shell \(k\le i\). When its serving signal is decoded, the desired signal, the serving signals from shells \(k+1,\ldots,i\), and the residual term \(Z_{n,i}\) all remain in the denominator, so successful decoding requires \(y_k/(x+Z_{n,i}+\sum_{j=k+1}^{i}y_j)>\tau_k\), or equivalently
\begin{equation}\label{eq:sic_stage_upper_bound}
Z_{n,i}<\frac{y_k}{\tau_k}-x-\sum_{j=k+1}^{i}y_j.
\end{equation}
An empty shell imposes no such constraint because its serving signal is absent and the corresponding step is skipped. Combining \eqref{eq:incremental_desired_interval} with \eqref{eq:sic_stage_upper_bound} over all occupied lower shells yields the endpoints \(\ell_{n,i}\) and \(u_{n,i}\) in \eqref{eq:joint_interval_lower} and \eqref{eq:joint_interval_upper}, and hence \eqref{eq:joint_event_interval_equivalence}. If shell \(i\) is empty, then \(y_i=0\) and \(\ell_{n,i}=x/\tau_n\ge u_{n,i}\), so the interval is empty, which correctly reflects that skipping an empty shell cannot create a new decoding success.

Conditioned on \(E_n\) and \(\mathbf R=\mathbf r\), \(Z_{n,i}\) is independent of \(\mathsf S_n\) and \(\mathbf Y_i\) because these quantities depend on disjoint fading, antenna-gain, and point-process components. Hence \(\mathbb P(\ell\le Z_{n,i}<u)=F^{<}_{Z_{n,i}\mid\mathbf r}(u)-F^{<}_{Z_{n,i}\mid\mathbf r}(\ell)\), where the strict distribution function retains any probability mass at the lower endpoint, which proves \eqref{eq:joint_residual_interval_probability}.

\bibliographystyle{IEEEtran}
\bibliography{SIC}

\end{document}